\documentclass[11pt,a4paper]{article}
\usepackage{amsmath,amssymb,latexsym}
\usepackage{graphicx, pdfpages}
\usepackage [utf8]{inputenc}
\usepackage[T2A]{fontenc}
\usepackage[left=2cm,right=2cm, top=1cm,bottom=1cm]{geometry}
\usepackage[colorlinks=true]{hyperref}
\usepackage{amsthm}
\usepackage{enumitem}
\usepackage{bm,bbm}
\usepackage{xcolor}

\textheight 23.5truecm \textwidth 15.5truecm
 \oddsidemargin0.1truecm
\evensidemargin 0.6truecm
 \topmargin -1.4cm

\bibliographystyle{plain}

\numberwithin{equation}{section}

\newcommand{\Compl}{\mathbb{C}}
\newcommand{\R}{\mathbb{R}}

\newcommand{\conj}[1]{\overline{#1}}
\newcommand{\cConjScl}[1]{\bar{#1}}

\newcommand{\aConjScl}[1]{#1^*}
\newcommand{\aConjMat}[1]{#1^+}

\newcommand{\charfp}[2]{\psi\left(#1,#2\right)}

\newcommand{\CF}{\mathsf{f}}
\newcommand{\qform}{\xi}

\newcommand{\ens}{M_n}
\newcommand{\sclA}{y}
\newcommand{\matA}{Y}

\newcommand{\cMatGin}{X}
\newcommand{\cMatPos}{\mathcal{Z}}
\newcommand{\cSclN}{t}
\newcommand{\cVecN}{\bm{\cSclN}}
\newcommand{\cSclA}{q}
\newcommand{\cMatA}{Q}

\newcommand{\cSclB}{a}
\newcommand{\cVecGI}{\bm{h}}
\newcommand{\EcVec}{\bm{u}}
\newcommand{\UcVec}{\bm{u}}
\newcommand{\QcVec}{\bm{q}}

\newcommand{\aSclB}{\phi}
\newcommand{\aVecB}{\bm{\aSclB}}
\newcommand{\aMatB}{\Phi}
\newcommand{\aSclBt}{\varphi}
\newcommand{\aVecBt}{\bm{\aSclBt}}
\newcommand{\aSclC}{\theta}
\newcommand{\aVecC}{\bm{\aSclC}}
\newcommand{\aMatC}{\Theta}
\newcommand{\aSclCt}{\vartheta}
\newcommand{\aVecCt}{\bm{\aSclCt}}
\newcommand{\aSclD}{\rho}
\newcommand{\aVecD}{\bm{\aSclD}}
\newcommand{\aSclE}{\tau}

\newcommand{\aSclH}{\aSclE}
\newcommand{\aVecH}{\bm{\aSclH}}
\newcommand{\aVecGrI}{\bm{\upsilon}}

\newcommand{\cumul}[2]{\kappa_{#1,#2}}
\newcommand{\indexset}{\mathcal{I}}

\newcommand{\stpointsnbh}{\Omega_n}
\newcommand{\idom}{\mathcal{D}}
\newcommand{\Vanddet}{\triangle}
\newcommand{\herm}{\mathcal{H}}

\newcommand{\der}[2]{\frac{d #1}{d #2}}

\newcommand{\abs}[1]{\left\lvert#1\right\rvert}

\newcommand{\norm}[1]{\left\lVert#1\right\rVert}
\newcommand{\normsized}[2][ ]{#1\lVert#2#1\rVert}

\DeclareMathOperator{\tr}{tr}
\DeclareMathOperator{\E}{\mathbf{E}}

\DeclareMathOperator{\diag}{diag}

\newtheorem{thm}{Theorem}
\newtheorem{prop}{Proposition}
\newtheorem{lem}{Lemma}

\theoremstyle{remark}

\title{Characteristic polynomials of sparse non-Hermitian random matrices}
\author{Ievgenii Afanasiev\thanks{B. Verkin Institute for Low Temperature Physics and Engineering of the National Academy of Sciences of Ukraine, Kharkiv, Ukraine;
e-mail: afanasiev@ilt.kharkov.ua, ie.afanasiev@gmail.com. The author is partially supported by the Grant EFDS-FL2-08 of the found The European Federation of Academies of Sciences and Humanities (ALLEA).} \and
 Tatyana Shcherbina
\thanks{ Department of Mathematics, University of Wisconsin - Madison, USA, e-mail: tshcherbyna@wisc.edu. Supported in part by Alfred P. Sloan Foundation grant FG-2022-18916.}
}
\date{}

\begin{document}

\maketitle

\begin{abstract}
We consider the asymptotic local behavior of the second correlation functions of the characteristic polynomials of sparse non-Hermitian random matrices $X_n$ whose entries have the form $x_{jk}=d_{jk}w_{jk}$ with iid complex standard Gaussian $w_{jk}$ and normalised iid Bernoulli$(p)$ $d_{jk}$. It is shown that, as $p\to\infty$, the local asymptotic behavior of the second correlation function of characteristic polynomials near $z_0\in \mathbb{C}$ coincides with those for Ginibre ensemble: it converges to a determinant with Ginibre kernel in the bulk $|z_0|<1$, and it is factorized if $|z_0|>1$. For the finite $p>0$, the behavior is different and exhibits the transition between different regimes depending on values $p$ and $|z_0|^2$.    
\end{abstract}

\section{Introduction}
Introduce $n\times n$ non-Hermitian random matrices 
\begin{equation} \label{sparse}
 X_n = (x_{jk})_{j,k = 1}^n,
\end{equation}
whose entries can be written in the form $$x_{jk} = d_{jk}w_{jk}$$ 
with i.i.d. complex random  variables $w_{jk}$ such that
\begin{align}\label{moments}
\E\{w_{jk}\} = \E\{w_{jk}^2\} = 0, \quad \E\{\abs{w_{jk}}^2\} = 1,
\end{align}
 and  normalized   i.i.d. Bernoulli$(p)$, $0<p\le n$, indicators $d_{jk}$ independent of $\{w_{jk}\}$, i.e. 
\begin{align}\label{d}
d_{jk} = \frac{1}{\sqrt{p}}\begin{cases}
    1, \text{ with probability } \frac{p}{n},\\
    0, \text{ with probability } 1 - \frac{p}{n}.
\end{cases}
\end{align}
We will refer to this ensemble as to {\it sparse non-Hermitian random matrices}. Parameter $p$ can be fix or be dependent on $n$.
Clearly, this ensemble interpolates between non-Hermitian random matrices with iid entries for $p=n$ and a very sparse matrices (who has, on average,
$p$ non-zero entries in a row) for a fixed $p$. 

The  limiting empirical spectral distributions for such matrices with $p$ growing together with $n$ 
was excessively studied in the mathematical literature (see, e.g., \cite{TV:08}, \cite{GoTi:10}, \cite{W:12}, \cite{BR:19} and references therein)
with an optimal result obtained by Rudelson and Tikhomirov in \cite{RT:19}: as soon as $p\to\infty$ together with $n$, the empirical spectral distribution of
sparse non-Hermitian random matrices converges  weakly in probability to the circular law, i.e. the uniform distribution on a unit disk.
The existence of the limiting empirical spectral distributions for finite $p>0$ (in this case it is not a circular law anymore) was obtained very 
recently in \cite{SSS:23}.

The local eigenvalue statistics of (\ref{sparse}) is much less studied. For a non-Hermitian matrices with iid random entries (i.e. $p=n$ case) the local
eigenvalue statistics in the bulk and at the edge of the spectrum coincide with those of the Ginibre ensemble, i.e. matrices with iid Gaussian entries. This is known as the {\it universality of non-Hermitian random matrices} (see \cite{Ta-Vu:15}, \cite{CiErS:ed} and references therein). For the sparse case, the universality at the edge of the spectrum for $ n^{\alpha}\ll p \le \tfrac{1}{2}$ was obtained recently in \cite{He:23}. The bulk universality with $p\ll n$ is still an open question.

In this paper we are going to study the local behavior of correlation functions of characteristic polynomials. For the non-Hermitian random matrices it can be defined as
\begin{equation}\label{F_m}
\CF_k (z_1,\ldots, z_k)=\mathbf{E}\Big\{\prod\limits_{s=1}^{k}\det(H_n-z_s)\det(H_n-z_s)^*\Big\}.
\end{equation}
We are interested in the asymptotic behavior of $\CF_2$ for matrices (\ref{sparse}) as $n\to\infty$
and
\begin{equation}\label{z}
 z_j=z_0+\tfrac{\zeta_j}{\sqrt n},\,\,j=1,2.
\end{equation}
Characteristic polynomials are the objects of independent interest because of their connections to the
number theory, quantum chaos, integrable systems, combinatorics, representation theory and others.
In additional, although $\CF_k$ is not a local object in terms of eigenvalue statistics, it is also expected 
to be universal in some sense. In particular, it was proved in \cite{Af:19} (see also \cite{Ak-Ve:03} for  the Gaussian (Ginibre) case)
that for non-Hermitian random matrices $H$ with iid complex entries with mean zero, variance one, and $2k$ finite moments
for any $z_j=z_0+\zeta_j/\sqrt n$, $j=1,..,k$ and $|z_0|<1$ we get
\begin{equation}\label{ChP_lim}
\lim\limits_{n\to\infty} n^{-\tfrac{k^2-k}{2}}\dfrac{\CF_{k}(z_1,\ldots, z_k)}{\prod_j \CF_1(z_j)}=C_k \dfrac{\det(K^{(b)}(\zeta_i,\zeta_j))_{i,j}^k}{|\Delta(\zeta)|^2},
\end{equation}
where 
\begin{equation}\label{K_b}
K^{(b)}(w_1,w_2)=e^{-|w_1|^2/2-|w_2|^2/2+w_1\bar w_2},
\end{equation}
$\Delta(\zeta)$ is  a Vandermonde determinant of $\zeta_1,\ldots,\zeta_k$ and
$C_k$ is constant depending only on the fourth cumulant $\kappa_4=\E[|H_{11}|^4]-2$ of the elements distribution, but not on the higher moments.
In particular, this means that the local limiting behavior  (\ref{ChP_lim}) of non-Hermitian matrices with iid entries coincides with those for the Ginibre ensemble as soon as the elements distribution has four Gaussian moments, i.e. the local behavior of the correlation functions of characteristic polynomials also exhibits a certain form of universality. Similar results were obtained for many classical Hermitian random matrix ensembles
(see, e.g., \cite{Br-Hi:00}, \cite{Br-Hi:01}, \cite{St-Fy:03}, \cite{TSh:ChW}, \cite{TSh:ChSC}, \cite{Af:16}, \cite{TSh:ChB}, etc.)

Notice that the local asymptotic behavior of characteristic polynomials of the sparse {\it Hermitian} random matrices was obtained in \cite{Af:16}.
In particular, it was shown that while for $p\to\infty$ the behavior coincides with that for Gaussian Unitary Ensemble (GUE), i.e. Hermitian matrices with iid (up to the symmetry) Gaussian entries.  For the finite $p$ the local asymptotic behavior
 of the second correlation function of characteristic polynomials of spares Hermitian random matrices demonstrates the transition: when $p<2$ the second correlation function of
 characteristic polynomials factorizes in the
limit $n\to\infty$, while for $p>2$ there appears an interval $(\lambda_-(p), \lambda_+(p))$ such that inside $(\lambda_-(p), \lambda_+(p))$ the second correlation function behaves like that for GUE, while
outside the interval the second correlation function is still factorized.

The goal of the current paper is to establish similar result for the sparse non-Hermitian matrices (\ref{sparse}). 
Define
\begin{equation}\label{b}
    b = \sqrt{\tfrac{2(n - p)}{np}}
\end{equation}
with $p$ of (\ref{d}). Notice that if $p$ is finite, then
\begin{equation}\label{fin_p}
    b=\sqrt{\dfrac{2}{p}}+O(n^{-1}),\quad n\to\infty,
\end{equation}
and 
\begin{equation}\label{inf_p}
    b=O(p^{-1/2}),\quad n\to\infty,
\end{equation}
if $p\to\infty$ but $p<(1-\varepsilon)n$ for some $\varepsilon>0$.

\begin{figure}
\centering
\includegraphics[width=0.7\textwidth]{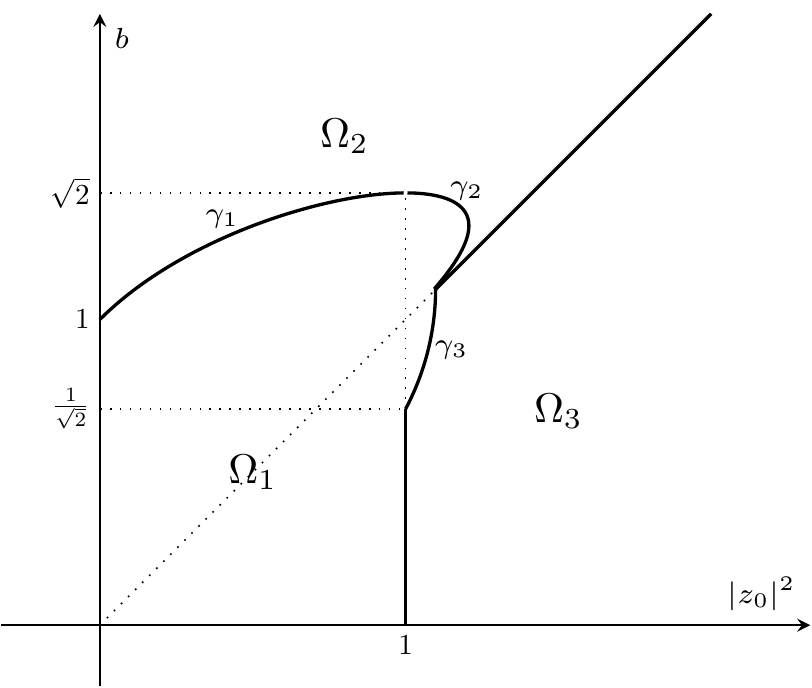}
\caption{phase diagram of the three different types of  behavior of $\CF_2$}
\end{figure}

The main result of the paper is the following theorem:
\begin{thm}\label{thm:main}
   Let $X_n$ be the sparse non-Hermitian complex random matrices (\ref{sparse}) with the standard complex Gaussian $w_{jk}$ and finite fixed $p>0$.
   Then for the second correlation function of characteristic polynomials $\CF_2$ of (\ref{F_m}) with $z_1, z_2$ from (\ref{z}) we have 
   \begin{enumerate}[label=(\roman*)]
       \item if $(b,|z_0|^2)\in \Omega_1$, then 
       \begin{equation}\label{*-behav}
           \lim_{n \to \infty} \dfrac{\CF_{2}(z_1, z_2)}{\sqrt{\CF_2(z_1,z_1) \CF_2(z_2,z_2)}} = e^{ -\gamma(\Re (\bar z_0(\zeta_1 - \zeta_2)))^2} 
           \dfrac{\det(K^{(b)}(\sqrt{\beta}\zeta_i,\sqrt{\beta}\zeta_j))_{i,j}^2}{\beta\abs{\Vanddet(\zeta)}^2},
       \end{equation}
       where $K^{(b)}$ is defined in (\ref{K_b}) and $\beta \in [0, 1]$ is a solution to the equation
       \begin{equation*}
       p\beta - p + 2 = p\abs{z_0}^2(1 - \beta)^2(2 - p\abs{z_0}^2(1 - \beta)),
       \end{equation*}
       and $\gamma>0$ is a certain constant depending on $p$ and $|z_0|^2$ (see (\ref{gamma})).
       \item if $(b,|z_0|^2)\in \Omega_2$, then 
       \begin{equation}\label{v-behav}
           \lim_{n \to \infty} \dfrac{\CF_{2}(z_1, z_2)}{\sqrt{\CF_2(z_1,z_1) \CF_2(z_2,z_2)}} = e^{ -p(\Re( \bar z_0(\zeta_1 - \zeta_2)))^2}.
       \end{equation}
       
       \item if $(b,|z_0|^2)\in\Omega_3$, then 
       \begin{equation}\label{0-behav}
           \lim_{n \to \infty} \dfrac{\CF_{2}(z_1, z_2)}{\sqrt{\CF_2(z_1,z_1) \CF_2(z_2,z_2)}} = 1.
       \end{equation}
   \end{enumerate}
The domains $\Omega_i$, $i=1,2,3$ are shown in Figure 1. Here $\gamma_1$ is a curve
$$|z_0|^2=\dfrac{b^2-b\sqrt{2-b^2}}{2}, \quad b\in [1,\sqrt2],$$  
and $\gamma_2$, $\gamma_3$ are certain explicit curves that will be defined later (see Lemma \ref{l:I>II} and Lemma \ref{l:I>III}).
\end{thm}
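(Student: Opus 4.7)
The plan is a supersymmetric (Grassmann) integral representation followed by a saddle-point analysis of a reduced, constant-dimensional matrix integral whose effective action encodes the dependence on $(b,|z_0|^2)$.

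\emph{Step 1 (integral representation and averaging).} Represent each of the four determinants in $|\det(X_n-z_1)|^2\,|\det(X_n-z_2)|^2$ as a Berezin integral over Grassmann vectors $\{\eta^{(s)}_j,\chi^{(s)}_j\}_{s=1,2,\,j=1,\dots,n}$. The resulting exponent is linear in the entries $x_{jk}=d_{jk}w_{jk}$, so the complex Gaussian average over $w_{jk}$ followed by the Bernoulli average over $d_{jk}$ can be performed entry by entry and produces the factor
\[
  \prod_{j,k=1}^{n}\Bigl(1-\tfrac{p}{n}+\tfrac{p}{n}\,e^{A_{jk}B_{jk}/p}\Bigr),\qquad A_{jk}=\sum_{s=1}^{2}\bar\eta^{(s)}_j\eta^{(s)}_k,\quad B_{jk}=\sum_{s=1}^{2}\bar\chi^{(s)}_j\chi^{(s)}_k,
\]
together with a simple factor accounting for the $z_s$-linear part. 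For $p=n$ this collapses to the Gaussian quartic action used in \cite{Af:19}, but for finite $p$ the full transcendental factor must be retained.

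\emph{Step 2 (reduction to a finite-dimensional integral).} The coupling between different $(j,k)$ through the bilinears $A_{jk}B_{jk}$ is decoupled by a Hubbard-Stratonovich transformation: introduce $2\times 2$ bosonic Hermitian matrices $Q^{(\eta)},Q^{(\chi)}$ dual to $n^{-1}\sum_j\bar\eta^{(s)}_j\eta^{(t)}_j$, $n^{-1}\sum_j\bar\chi^{(s)}_j\chi^{(t)}_j$, together with complex $2\times2$ auxiliary variables coupling $\eta$ and $\chi$. Integrating out all the Grassmann variables leaves an integral over a constant number of matrix variables of the schematic form $e^{nL(Q;\,b,|z_0|^2)}\Psi_n(Q;\zeta_1,\zeta_2)$, where $L$ is a transcendental function encoding the full Bernoulli-compound Poisson structure and $\Psi_n$ is a finite-degree polynomial carrying the local $\zeta$-dependence.

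\emph{Step 3 (saddle classification and output).} Solving the critical-point equations of $L$ yields, modulo the residual $\mathrm{U}(1,1)$ symmetry, three families of candidate saddles: a ``Ginibre-type'' saddle parametrised by $\beta\in[0,1]$ solving the equation in part (i); a partially symmetry-broken saddle; and the trivial fully factorised saddle. The dominant saddle is the one maximising $\Re L$, and the curves $\gamma_1,\gamma_2,\gamma_3$ are precisely those on which two saddle values coincide, so that they separate the three phases in Figure 1. Inside $\Omega_1$, the unbroken compact Grassmann direction supports the Dyson-Mehta style resummation of $\Psi_n$ and produces the rescaled Ginibre determinant $\det(K^{(b)}(\sqrt\beta\zeta_i,\sqrt\beta\zeta_j))$, while the single remaining soft bosonic mode yields a Gaussian factor in the combination $\Re(\bar z_0(\zeta_1-\zeta_2))$, identified with $\gamma$. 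In $\Omega_2$ the Grassmann mode is gapped and only the bosonic Gaussian (\ref{v-behav}) survives, and in $\Omega_3$ all non-trivial modes are gapped and the limit is the factorised one.

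\emph{Main obstacle.} The delicate point is the global steepest-descent analysis of $L$ at finite $p$: $L$ is transcendental rather than polynomial in $Q$, several local maxima coexist, and the transition curves $\gamma_2,\gamma_3$ are defined by two of them becoming equi-dominant. One must (a) enumerate all critical points and rule out spurious ones; (b) deform the original Grassmann and bosonic contours onto steepest-descent contours through the \emph{globally} dominant saddle, which, at the transitions, requires a discontinuous reorganisation of the deformation; and (c) verify that the transverse Hessian remains sign-definite along the deformed contour, uniformly in compact subsets of each $\Omega_i$. Establishing (b)-(c) up to the explicit boundaries $\gamma_1,\gamma_2,\gamma_3$, and matching the fluctuation integral to the exact prefactors $\beta$ and $\gamma$, is expected to be the main technical burden.
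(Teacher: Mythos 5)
Your high-level route (Berezin integral, Hubbard--Stratonovich, Laplace/saddle-point with three phases separated by equi-dominance curves) is the same as the paper's. But there are two substantive errors that would stall the attempt at exactly the point you flag as the ``main obstacle.''

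First, the claim that ``for finite $p$ the full transcendental factor must be retained'' is wrong, and missing this is what makes your envisioned analysis seem hard. For $\CF_2$ the Grassmann bilinears $A_{jk}=\sum_{s=1}^{2}\bar\eta^{(s)}_j\eta^{(s)}_k$ and $B_{jk}=\sum_{s=1}^{2}\bar\chi^{(s)}_j\chi^{(s)}_k$ are sums of \emph{two} nilpotent monomials, so $A_{jk}^3=B_{jk}^3=0$. Consequently $\log\bigl(1-\tfrac{p}{n}+\tfrac{p}{n}e^{A_{jk}B_{jk}/p}\bigr)$ has an \emph{exact} finite Taylor expansion, and only the cumulants $\kappa_{p,s}$ with $p,s\le 2$ survive. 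After using the Gaussian moments of $w_{jk}$ one finds $\kappa_{1,0}=\kappa_{2,0}=\kappa_{0,1}=\kappa_{0,2}=0$, $\kappa_{1,1}=1/n$, and $\kappa_{2,2}=2(n-p)/(pn^2)$; everything else vanishes. The reduced action is therefore not transcendental: it is $\hat f(Q,v)=-\tr Q^*Q-|v|^2+\log h(Q,v)$ with $h$ a fixed degree-two polynomial in $(Q,Q^*,v,\bar v)$. Its critical point equations reduce, after a singular value decomposition, to a \emph{cubic} in a single scalar $\alpha$ (parametrising $x_*=\alpha b$), which is what makes the three-phase classification tractable. You are set up to fight a technical problem that the nilpotency has already dissolved.

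Second, the dual-field structure you propose is not the correct one. The $\kappa_{1,1}$ coupling pairs $\eta$-type with $\chi$-type bilinears, so its Hubbard--Stratonovich dual is a single \emph{general complex} $2\times 2$ matrix $Q$ (not two Hermitian $Q^{(\eta)},Q^{(\chi)}$), and the $\kappa_{2,2}$ coupling only contributes through the single nontrivial index set of size two, so its dual is a single complex \emph{scalar} $v$. The residual invariance is accordingly $U(2)\times U(2)$ acting on $Q$ via the SVD $Q=UTV^{*}$, not a $U(1,1)$; the rank-one determinant structure at the end then falls out of the Harish-Chandra/Itzykson--Zuber formula for the $U(2)$ integral, not a Dyson--Mehta resummation. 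These are not cosmetic: with your Hermitian duals and $U(1,1)$ bookkeeping you would be solving a different variational problem, and you would not land on the cubic equation (3.13) whose root structure (Lemmas~\ref{l:*}, \ref{l:I>II}, \ref{l:I>III}) is what actually defines $\gamma_1,\gamma_2,\gamma_3$ and the explicit constants $\beta$ and $\gamma$ in (\ref{*-behav}).

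In short, the bulk of the paper's work -- identifying the three saddle families, enumerating the real roots of the cubic in $\alpha$ and comparing the three values $F_I,F_{II},F_{III}$, and then doing the transverse Gaussian integration plus the HCIZ step at the $*$-saddle -- starts from a closed form you never arrive at in your setup. Fix the cumulant truncation and the $Q$, $v$ duals, and your Step 3 outline becomes essentially the paper's proof; as written, the proposal overestimates the difficulty of the reduction and underestimates the algebraic phase-boundary analysis.
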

In the case $p\to \infty$ we can also get
\begin{thm}\label{thm:inf}
   Let $X_n$ be the sparse non-Hermitian complex random matrices (\ref{sparse}) with the standard complex Gaussian $w_{jk}$ and $p\to\infty$ but $p<(1-\varepsilon)n$ for some $\varepsilon>0$. Then the limiting behavior of second correlation function of characteristic polynomials $\CF_2$ of (\ref{F_m}) with $\{z_j\}_{j=1,2}$ from (\ref{z}) coincides with those for Ginibre ensemble (\ref{ChP_lim}) in the bulk, i.e.  if $|z_0|<1$. If $|z_0|>1$, then (\ref{0-behav}) holds.
\end{thm}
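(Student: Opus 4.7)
The plan is to revisit the supersymmetric contour-integral representation of $\CF_2(z_1,z_2)$ used in the proof of Theorem \ref{thm:main} and to track its asymptotic analysis while allowing $p=p(n)\to\infty$ with $p<(1-\varepsilon)n$. In this regime $b=O(p^{-1/2})\to 0$ by (\ref{inf_p}), so the effective action is a small perturbation of the Ginibre action and the phase diagram collapses: the curves $\gamma_1,\gamma_2,\gamma_3$ are confined to a shrinking neighbourhood of the axis $b=0$, so any $(b,|z_0|^2)$ with $|z_0|<1$ eventually lies in $\Omega_1$, while any $(b,|z_0|^2)$ with $|z_0|>1$ eventually lies in $\Omega_3$. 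The two cases of the theorem then reduce to tracking the $p\to\infty$ limit of the corresponding case of Theorem \ref{thm:main}.

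For the bulk case $|z_0|<1$, the next step is to solve the saddle equation of Theorem \ref{thm:main}(i) asymptotically. Writing $\beta_p=1-u_p$ and substituting into $p\beta-p+2=p|z_0|^2(1-\beta)^2(2-p|z_0|^2(1-\beta))$, the ansatz $u_p=c/p$ gives $-c+2=O(1/p)$, so $c\to 2$ and hence $\beta_p=1-2/p+o(1/p)$. Consequently $\sqrt{\beta_p}\zeta_j\to\zeta_j$ and $K^{(b)}(\sqrt{\beta_p}\zeta_i,\sqrt{\beta_p}\zeta_j)\to K^{(b)}(\zeta_i,\zeta_j)$. A parallel expansion of the constant $\gamma$ from (\ref{gamma}) shows $\gamma=o(1)$, so the exponential prefactor in (\ref{*-behav}) tends to $1$. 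Plugging these into (\ref{*-behav}) gives the Ginibre value $\det(K^{(b)}(\zeta_i,\zeta_j))_{i,j=1,2}/|\Delta(\zeta)|^2$; combined with the normalisation $\sqrt{\CF_2(z_j,z_j)}\sim\sqrt{n\,C_2}\,\CF_1(z_j)$ (extracted from (\ref{ChP_lim}) by letting $\zeta_2\to\zeta_1$, where $\det(K^{(b)}(\zeta_i,\zeta_j))/|\Delta(\zeta)|^2\to 1$), this matches (\ref{ChP_lim}) for $k=2$. For the outside case $|z_0|>1$, the membership $(b,|z_0|^2)\in\Omega_3$ for all sufficiently large $p$ together with Theorem \ref{thm:main}(iii) delivers (\ref{0-behav}) at once.

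The main technical difficulty is that Theorem \ref{thm:main} is proved for \emph{fixed} $p$, whereas here we need the conclusion for $p=p(n)\to\infty$. Every step of the steepest-descent analysis in Theorem \ref{thm:main} must therefore be made quantitative and uniform in $p$: the Gaussian fluctuation integrals around the saddle $\beta_p$, the exponential separation of the subdominant saddles, and the justification of the contour deformations must all be accompanied by error bounds with explicit $p$-dependence. Since $\beta_p\to 1$ only at the slow rate $2/p$, the quadratic approximation near the saddle deteriorates as $p$ grows; showing that the ensuing non-Gaussian corrections still vanish as $n\to\infty$ for arbitrarily slow growth $p(n)\to\infty$ is where the real work lies.
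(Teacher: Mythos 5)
Your proposal follows essentially the same route as the paper, whose proof of this theorem is the one-line remark that it "follows from the consideration above for $b=0$ with minor modifications." You correctly observe that $b\to0$ collapses the phase diagram to $\Omega_1$ (for $|z_0|<1$) and $\Omega_3$ (for $|z_0|>1$), solve the $\beta$-equation asymptotically to get $\beta=1-2/p+o(1/p)$, note $\gamma\to0$, and verify the formula degenerates to the Ginibre answer; you also correctly flag that an iterated limit $\lim_p\lim_n$ does not automatically give $\lim_{n,\,p(n)}$ and that the Laplace-method estimates must be made uniform in $p$. That ordering concern is the right one to raise, and is exactly what the paper sweeps under "minor modifications."

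However, your diagnosis of \emph{why} this uniformity could be delicate is backwards. You write that "since $\beta_p\to1$ only at the slow rate $2/p$, the quadratic approximation near the saddle deteriorates as $p$ grows," and conclude that slow growth $p(n)\to\infty$ is where the real work lies. In fact the opposite is true: as $p\to\infty$ one has $b\to0$, and in that regime the saddle data $(t_*,x_*,h_*)\to(\sqrt{1-|z_0|^2},\,0,\,1)$ vary continuously, the Hessian $f_0''(t_*I,x_*)$ from Lemma \ref{lem:est for Re f} converges to a fixed strictly negative-definite matrix for $|z_0|<1$, $h_0$ stays bounded away from $0$ on the relevant neighbourhood, and the constants hidden in the $O(n^{-3/2}\log^3 n)$ remainder of Lemma \ref{lem:f(UT V^*) expansion} depend continuously on $b$ and are therefore uniform for $b\in[0,b_0]$. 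So the quadratic approximation does \emph{not} deteriorate as $p$ grows; the $b\to0$ limit is the easy, non-degenerate endpoint, and the "minor modifications" are genuinely minor. The parameter $\beta_p$ enters only in the closed-form answer, not as a small parameter controlling the width of a degenerating saddle, so its slow convergence to $1$ costs nothing in the Laplace analysis.
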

In order to prove Theorems \ref{thm:main}-\ref{thm:inf} we are going to apply the supersymmetry techniques (SUSY). SUSY techniques is based on the representation of the determinant as an integral (formal) over the Grassmann variables, which allows to obtain the integral representation for
 the main spectral characteristics of random matrices (such as density of states, correlation functions, characteristic polynomial, etc.)  as an integral containing both
complex and Grassmann (anticommuting) variables.  Although at a heuristic level  SUSY 
was actively  used in theoretical physics literature (see e.g. reviews \cite{Ef:97}, \cite{M:00})
for several decades, the rigorous analysis of such integrals poses a very serious challenge.
However, the method was successfully applied to the rigorous study of local regime of some random matrix ensembles, including the most
successful applications to the Gaussian Hermitian random band matrices (see \cite{SS:Un} and reference therein).
The asymptotic behavior of characteristic polynomials is known to be especially convenient 
for the SUSY approach and were successfully studied by the techniques
for many Hermitian (see, e.g., \cite{Br-Hi:00}, \cite{Br-Hi:01}, \cite{TSh:ChW}, \cite{TSh:ChSC},\cite{Af:16}, \cite{TSh:ChB} etc.) and some
non-Hermitian (see, e.g.,  \cite{APSo:09}, \cite{Af:19}, \cite{Af:22}) ensembles. 

The paper organized as follows. In Section 2 we give the brief outline of SUSY techniques and obtain the SUSY integral representation of $\CF_2$ of (\ref{F_m}). Section 3 is devoted to the proof of Theorem \ref{thm:main} by performing the saddle-point analysis of the obtained representation: in Section 3.1 we determine the main saddle-points that can give the leading contribution to the integral; in Section 3.2 we determine the domain of domination of
each of the obtained saddle-point; and finally the contribution of each of the saddle-points is computed in  Section 3.3.

\section{Integral representation for $\CF_2$}
In this section we obtain a convenient integral representation for the correlation function of characteristic polynomials $\CF_2$ defined by \eqref{F_m}.
\begin{prop}\label{prop:IR}
	Let an ensemble $\ens$ be defined by~\eqref{sparse} and~\eqref{moments}. Then 
	the second correlation function of the characteristic polynomials $\CF_2$ defined by \eqref{F_m} can be represented in the following form
	\begin{equation}\label{IR result}
	\CF_2 (z_1,z_2) = \left(\frac{n}{\pi}\right)^{5} \int e^{n\hat{f}(Q,v)} dQ\, d\bar v\, dv,
	\end{equation}
	where $\cMatA$ is a complex $2 \times 2$ matrix, $v\in\Compl$, $$d Q = \prod\limits_{j,k = 1}^2 d \bar Q_{jk}\, dQ_{kj} $$ and
	\begin{align}
		\label{tilde f def}
		&\hat{f}(Q,v) = - \tr Q^*Q - |v|^2 + \log h(Q,v); \\
		\label{h def}
		&h(Q,v) = \det A + bv\det Q^* + b\bar v\det Q + b^2 \abs{v}^2; \\
		\label{A def}
		&A = A(Q) = \begin{pmatrix}
			-Z 			& \cMatA\\
			-\cMatA^* 	& -Z^*
		\end{pmatrix}, \quad Z=\begin{pmatrix}
     z_1&0\\
     0&z_2
 \end{pmatrix}
	\end{align}
	with $b$ of (\ref{b}). 
\end{prop}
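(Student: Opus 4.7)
The plan is to apply the supersymmetry (SUSY) technique: represent each determinant in $\CF_2$ as a Berezin integral over Grassmann variables, take the expectation over $X_n$ entry-wise, and decouple the resulting Grassmann bilinears via Hubbard--Stratonovich (HS) transformations. Using $\det M = \int e^{-\bar\psi M\psi}\,d\bar\psi\,d\psi$, introduce four $n$-dimensional families of Grassmann variables $\{\psi_s,\bar\psi_s\}_{s=1}^2$ (for the two factors $\det(X_n-z_s)$) and $\{\chi_s,\bar\chi_s\}_{s=1}^2$ (for the two factors $\det((X_n-z_s)^*)$). Moving the expectation inside the Grassmann integral and using independence of the entries, the $X_n$-dependence becomes $\prod_{j,k}\E[\exp(-x_{jk}a_{jk}-\bar x_{jk}b_{jk})]$, where $a_{jk}=\sum_s\bar\psi_{s,j}\psi_{s,k}$ and $b_{jk}=\sum_s\bar\chi_{s,k}\chi_{s,j}$ are Grassmann-even bilinears satisfying $a_{jk}^3=b_{jk}^3=0$. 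This nilpotency truncates the expansion; for complex Gaussian $w_{jk}$ the conditional expectation given $d_{jk}$ equals $e^{d_{jk}^2 a_{jk}b_{jk}}$, and averaging the Bernoulli $d_{jk}$ yields the single-entry factor $1+\tfrac{p}{n}(e^{a_{jk}b_{jk}/p}-1)$.

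Second, pass to the logarithm. Using $(a_{jk}b_{jk})^3=0$ to truncate both expansions in sight, one checks the exact identity
\[
\sum_{j,k}\log\!\left[1+\tfrac{p}{n}\bigl(e^{a_{jk}b_{jk}/p}-1\bigr)\right]=\tfrac{1}{n}\sum_{j,k}a_{jk}b_{jk}+\tfrac{b^2}{4n}\sum_{j,k}(a_{jk}b_{jk})^2,
\]
where the coefficient $b^2/(4n)=(n-p)/(2n^2p)$ arises from combining $(2np)^{-1}$ inside $e^{a_{jk}b_{jk}/p}-1$ with $-(2n^2)^{-1}$ from the quadratic term of $\log(1+\cdot)$. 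Recast the two sums in trace/product form: $\sum_{j,k}a_{jk}b_{jk}$ is a trace of the product of the $2\times 2$ Grassmann-even matrices built from $\sum_j\bar\psi_{s,j}\chi_{t,j}$ and $\sum_k\psi_{s,k}\bar\chi_{t,k}$, while $\sum_{j,k}(a_{jk}b_{jk})^2=4\Pi_1\Pi_2$ factorizes with $\Pi_1=\sum_j\bar\psi_{1,j}\bar\psi_{2,j}\chi_{1,j}\chi_{2,j}$ and $\Pi_2=\sum_k\psi_{1,k}\psi_{2,k}\bar\chi_{1,k}\bar\chi_{2,k}$. Apply two HS transformations: a matrix one, introducing $\cMatA\in\Mat_2(\Compl)$ to linearize the trace, and a scalar one, introducing $v\in\Compl$ to linearize $\Pi_1\Pi_2$. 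After the rescaling $v\mapsto\sqrt n\,v$, the Jacobians combine exactly to $(n/\pi)^5$, and the surviving exponent is $-n\tr \cMatA^*\cMatA-n|v|^2$ plus on-site, linear-in-Grassmann source terms.

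The Grassmann integrand now factorizes into $n$ identical single-site 8-Grassmann integrals. Each has a quadratic part assembled from the $-z_s,-\bar z_s$ on-site terms and the $\cMatA,\cMatA^*$ off-diagonal sources (which together form the $4\times 4$ matrix $A(\cMatA)$ of \eqref{A def}, up to a transposition of off-diagonal blocks whose determinant is preserved by a Sylvester-type identity) plus a quartic source $bv\,\bar\psi_{1}\bar\psi_{2}\chi_{1}\chi_{2}+b\bar v\,\psi_{1}\psi_{2}\bar\chi_{1}\bar\chi_{2}$. Expanding $e^{\text{quartic}}=1+(\text{quartic})+\tfrac12(\text{quartic})^2$ (the series terminates) and integrating term by term, the constant piece gives $\det A(\cMatA)$, the linear-$v$ and linear-$\bar v$ pieces reduce via $2\times 2$ cofactor expansion to $bv\det \cMatA^*$ and $b\bar v\det \cMatA$ (up to overall signs absorbed by a phase redefinition of $v$), and the quadratic piece yields $b^2|v|^2$, matching $h(\cMatA,v)$ of \eqref{h def}. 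Combining $\prod_j h(\cMatA,v)=e^{n\log h(\cMatA,v)}$ with $e^{-n\tr \cMatA^*\cMatA-n|v|^2}$ assembles $e^{n\hat f(\cMatA,v)}$ and yields the claimed representation with prefactor $(n/\pi)^5$. The main technical obstacle is precisely this per-site computation: one must carefully track all Grassmann signs to verify that the mixed $\cMatA$--$v$ contributions collapse exactly to $\det\cMatA^*$ and $\det\cMatA$ rather than yielding more generic polynomials in the entries of $\cMatA,\cMatA^*$.
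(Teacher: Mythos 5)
Your argument follows essentially the same route as the paper's: Grassmann (Berezin) representation of the two pairs of determinants, entry-wise averaging using independence and the nilpotency $a_{jk}^3=b_{jk}^3=0$ to truncate the logarithm, Hubbard--Stratonovich decoupling of the bilinear trace term into a complex $2\times 2$ matrix $Q$ and of the quartic term into a scalar $v$ (with exactly the $(n/\pi)^5$ prefactor), and a per-site eight-Grassmann integral that evaluates to $h(Q,v)=\det A + bv\det Q^* + b\bar v\det Q + b^2|v|^2$. The only organizational difference is that you compute the entry average directly from the Gaussian conditional characteristic function $\E_w\,e^{-dwa-d\bar wb}=e^{d^2ab}$ and then average over the Bernoulli indicator, whereas the paper expands $\log\psi(t_1,t_2)$ in the cumulants $\kappa_{p,s}$ of a single entry; for Gaussian $w$ these are the same computation, and your coefficient $b^2/(4n)=(n-p)/(2n^2p)$ exactly reproduces the paper's $\kappa_{2,2}$.
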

%


\subsection{Proof of Proposition \ref{prop:IR}}

To derive the integral representation of $\CF_2$ we will use the SUSY. For the reader convenience, we start with a very brief outline of the basic formulas of SUSY techniques we need. More detailed information about the techniques and its applications 
to random matrix theory  can be found, e.g.,
in \cite{Ef:97} or \cite{M:00}.

\subsection{SUSY techniques: basic formulas}
Let us consider two sets of formal variables
$\{\psi_j\}_{j=1}^n,\{\overline{\psi}_j\}_{j=1}^n$, which satisfy the anticommutation relations
\begin{equation}\label{anticom}
\psi_j\psi_k+\psi_k\psi_j=\overline{\psi}_j\psi_k+\psi_k\overline{\psi}_j=\overline{\psi}_j\overline{\psi}_k+
\overline{\psi}_k\overline{\psi}_j=0,\quad j,k=1,\ldots,n.
\end{equation}
Note that this definition implies $\psi_j^2=\overline{\psi}_j^2=0$.
These two sets of variables $\{\psi_j\}_{j=1}^n$ and $\{\overline{\psi}_j\}_{j=1}^n$ generate the Grassmann
algebra $\mathfrak{A}$. Taking into account that $\psi_j^2=0$, we have that all elements of $\mathfrak{A}$
are polynomials of $\{\psi_j\}_{j=1}^n$ and $\{\overline{\psi}_j\}_{j=1}^n$ of degree at most one
in each variable. We can also define functions of
the Grassmann variables. Let $\chi$ be an element of $\mathfrak{A}$, i.e.
\begin{equation}\label{chi}
\chi=a+\sum\limits_{j=1}^n (a_j\psi_j+ b_j\overline{\psi}_j)+\sum\limits_{j\ne k}
(a_{j,k}\psi_j\psi_k+
b_{j,k}\psi_j\overline{\psi}_k+
c_{j,k}\overline{\psi}_j\overline{\psi}_k)+\ldots.
\end{equation}
For any
sufficiently smooth function $f$ we define by $f(\chi)$ the element of $\mathfrak{A}$ obtained by substituting $\chi-a$
in the Taylor series of $f$ at the point $a$:
\[
f(\chi)=a+f'(a)(\chi-a)+\dfrac{f''(a)}{2!}(\chi-a)^2+\ldots
\]
Since $\chi$ is a polynomial of $\{\psi_j\}_{j=1}^n$,
$\{\overline{\psi}_j\}_{j=1}^n$ of the form (\ref{chi}), according to (\ref{anticom}) there exists such
$l$ that $(\chi-a)^l=0$, and hence the series terminates after a finite number of terms and so $f(\chi)\in \mathfrak{A}$.

Following Berezin \cite{Be:87}, we define the operation of
integration with respect to the anticommuting variables in a formal
way:
\begin{equation*}
\int d\,\psi_j=\int d\,\overline{\psi}_j=0,\quad \int
\psi_jd\,\psi_j=\int \overline{\psi}_jd\,\overline{\psi}_j=1,
\end{equation*}
and then extend the definition to the general element of $\mathfrak{A}$ by
the linearity. A multiple integral is defined to be a repeated
integral. Assume also that the ``differentials'' $d\,\psi_j$ and
$d\,\overline{\psi}_k$ anticommute with each other and with the
variables $\psi_j$ and $\overline{\psi}_k$. Thus, according to the definition, if
$$
f(\psi_1,\ldots,\psi_k)=p_0+\sum\limits_{j_1=1}^k
p_{j_1}\psi_{j_1}+\sum\limits_{j_1<j_2}p_{j_1,j_2}\psi_{j_1}\psi_{j_2}+
\ldots+p_{1,2,\ldots,k}\psi_1\ldots\psi_k,
$$
then
\begin{equation*}
\int f(\psi_1,\ldots,\psi_k)d\,\psi_k\ldots d\,\psi_1=p_{1,2,\ldots,k}.
\end{equation*}

   The key formulas we need in this subsection are the well-known complex Gaussian integration formula for a complex $n$-dimensional vector $\cVecN$
\begin{equation} \label{Gauss int}
\int\limits_{\Compl^n} \exp\left\{-\cVecN^*B\cVecN - \cVecN^*\cVecGI_2 - \cVecGI_1^*\cVecN\right\} d\cVecN^*d\cVecN = 
\pi^n {\det}^{-1} B \exp \{\cVecGI_1^*B^{-1}\cVecGI_2\},
\end{equation}
valid for any positive definite matrix $B$ and its analog for
Grassmann $n$-dimensional vector $\aVecH$ (see \cite{Be:87}):
\begin{equation}
	\label{Grass int}
	\int \exp\left\{-\aConjMat{\aVecH}A\aVecH - \aConjMat{\aVecH}\aVecGrI_2 - \aConjMat{\aVecGrI_1}\aVecH\right\} d\aConjMat{\aVecH}d\aVecH = 
	\det B \exp \{\aConjMat{\aVecGrI_1}B^{-1}\aVecGrI_2\} 
\end{equation}
valid for an arbitrary complex matrix $B$.

We will need also the following Hubbard--Stratonovich transformation formula which basically is an employment of \eqref{Gauss int} in the reverse direction:
\begin{align}\label{Hub_C}
&e^{ab}=\pi^{-1} \int\limits_{\mathbb{C}} e^{a \bar u+b u-\bar u u} d\bar u\, du.
\end{align}
Here $a, b$ can be complex numbers or sums of the products of even numbers of Grassmann variables (i.e. commuting elements of Grassmann algebra).

\subsection{Integral representation}
 Rewrite the expression \eqref{F_m} for $\CF_2$ using~\eqref{Grass int} and~\eqref{sparse}
\begin{align*}
	\CF_2(z_1,z_2) = \E \Bigg\{
	\int \exp \Bigg\{ -\sum\limits_{j = 1}^{2} \aConjMat{\aVecB_j}\left(\cMatGin - z_j\right)\aVecB_j - \sum\limits_{j = 1}^{2} \aConjMat{\aVecC_j}\left(\cMatGin - z_j\right)^*\aVecC_j \Bigg\} d\aMatB d\aMatC
	\Bigg\},
\end{align*}
where $\aVecB_j$, $\aVecC_j$, $j = 1, 2$ are $n$-dimensional vectors with components $\aSclB_{kj}$ and $\aSclC_{kj}$ respectively, $\aMatC$ and $\aMatB$ are $n\times 2$ matrices composed of columns $\aVecC_1, \aVecC_2$ and $\aVecB_1, \aVecB_2$, and
\begin{equation*}
   d\aMatB = \prod\limits_{j = 1}^{2} d\aConjMat{\aVecB_j} d\aVecB_j,\quad  d\aMatC = \prod\limits_{j = 1}^{2} d\aConjMat{\aVecC_j} d\aVecC_j.
\end{equation*}
 Denoting $\aVecBt_k = (\phi_{k1},\phi_{k2})^t$, $\aVecCt_k =  (\theta_{k1},\theta_{k2})^t$ we can rewrite the previous formula as
\begin{multline} \label{IR 1}
	\CF_2(z_1,z_2) = \E \Bigg\{
	\int \exp \Bigg\{ \sum\limits_{k = 1}^n \aConjMat{\aVecBt_k}Z\aVecBt_k + \sum\limits_{k = 1}^n \aConjMat{\aVecCt_k}Z^*\aVecCt_k \\
	+ \sum\limits_{k,l = 1}^n (\aMatB\aConjMat{\aMatB})_{lk}x_{kl} + \sum\limits_{k,l = 1}^n (\aMatC\aConjMat{\aMatC})_{kl}\bar{x}_{kl} \Bigg\} d\aMatB d\aMatC
	\Bigg\},
\end{multline}
where $Z$ is defined in (\ref{A def}).

Let us introduce a notation for a kind of ``Laplace--Fourier transform''
\begin{equation*}
	\charfp{t_1}{t_2} := \E \left\{
	e^{t_1x_{11} + t_2\bar{x}_{11}}
	\right\}.
\end{equation*}
Then the expectation in \eqref{IR 1} can be written in the following form
\begin{align*}
	\CF_2(z_1,z_2) = \int& \prod\limits_{k,l = 1}^n \charfp{(\aMatB\aConjMat{\aMatB})_{lk}}{(\aMatC\aConjMat{\aMatC})_{kl}} \\
	&\times \exp \Bigg\{ \sum\limits_{k = 1}^n \aConjMat{\aVecBt_k}Z\aVecBt_k + \sum\limits_{k = 1}^n \aConjMat{\aVecCt_k}Z^*\aVecCt_k \Bigg\} d\aMatB d\aMatC \\
	= \int& \exp \Bigg\{ \sum\limits_{k = 1}^n \aConjMat{\aVecBt_k}Z\aVecBt_k + \sum\limits_{k = 1}^n \aConjMat{\aVecCt_k}Z^*\aVecCt_k \\
	&\phantom{ \exp \Bigg\{} + \sum\limits_{k,l = 1}^n \log\charfp{(\aMatB\aConjMat{\aMatB})_{lk}}{(\aMatC\aConjMat{\aMatC})_{kl}} \Bigg\} d\aMatB d\aMatC.
\end{align*}
Expansion of $\log\Phi$ into series gives us
\begin{align}
	\notag
	\CF_2(z_1,z_2) = \int \exp \Bigg\{& \sum\limits_{k = 1}^n \aConjMat{\aVecBt_k}Z\aVecBt_k + \sum\limits_{k = 1}^n \aConjMat{\aVecCt_k}Z^*\aVecCt_k \\
	\label{IR 2}
	&+ \sum\limits_{k,l = 1}^n \sum\limits_{p,s = 0}^2 \frac{\cumul{p}{s}}{p!s!} \left((\aMatB\aConjMat{\aMatB})_{lk}\right)^p\left((\aMatC\aConjMat{\aMatC})_{kl}\right)^s \Bigg\} d\aMatB d\aMatC,
\end{align}
with
\begin{align*} 
	\cumul{p}{s} = \left.\frac{\partial^{p + s}}{\partial^p y_1 \partial^s y_2} \log\charfp{y_1}{y_2}\right|_{y_1 = y_2 = 0}.
\end{align*}
Using \eqref{sparse}--\eqref{d}, one can compute
\begin{equation}\label{cumul values}
\begin{split}
\cumul{0}{0} &= 0; \\
\cumul{1}{0} &= \conj{\cumul{0}{1}} = \E\{x_{11}\} = 0; \\
\cumul{2}{0} &= \conj{\cumul{0}{2}} = \E\{x_{11}^2\} - \E^2\{x_{11}\} = 0; \\
\cumul{1}{1} &= \E\{\abs{x_{11}}^2\} - \abs{\E\{x_{11}\}}^2 = \frac{1}{n};\\
\cumul{2}{2} &= \E\{\abs{x_{11}}^4\} - 2\E^2\{\abs{x_{11}}^2\} = \frac{2(n - p)}{pn^2}.
\end{split}
\end{equation}
Let us transform the terms in the exponent again. For non-zero terms with $p=s=1$ or $p=s=2$ one can write
\begin{align}
	\notag
	\sum\limits_{k,l = 1}^n &\left((\aMatB\aConjMat{\aMatB})_{lk}\right)^p\left((\aMatC\aConjMat{\aMatC})_{kl}\right)^s \\
	\notag
	&= \sum\limits_{k,l = 1}^n \Bigg(\sum\limits_{j = 1}^2 \aSclB_{lj}^{\phantom{+}}\aConjScl{\aSclB_{kj}}\Bigg)^p\Bigg(\sum\limits_{j = 1}^2 \aSclC_{kj}^{\phantom{+}}\aConjScl{\aSclC_{lj}}\Bigg)^s = p!s! \sum\limits_{k,l = 1}^n \sum\limits_{\substack{\alpha \in \indexset_{2, p} \\ \beta \in \indexset_{2, s}}} \prod\limits_{q = 1}^{p} \aSclB_{l\alpha_{q}}^{\phantom{+}}\aConjScl{\aSclB_{k\alpha_{q}}} \prod\limits_{r = 1}^{s}\aSclC_{k\beta_{r}}^{\phantom{+}}\aConjScl{\aSclC_{l\beta_{r}}} \\
	\notag
	&= (-1)^{p^2} p!s! \sum\limits_{k,l = 1}^n \sum\limits_{\substack{\alpha \in \indexset_{2, p} \\ \beta \in \indexset_{2, s}}} \prod\limits_{r = s}^{1} \aSclC_{k\beta_{r}}^{\phantom{+}} \prod\limits_{q = p}^{1} \aConjScl{\aSclB_{k\alpha_{q}}} \prod\limits_{q = 1}^{p} \aSclB_{l\alpha_{q}}^{\phantom{+}} \prod\limits_{r = 1}^{s} \aConjScl{\aSclC_{l\beta_{r}}} \\
	\label{sum order change}
	&= p!s! \sum\limits_{\substack{\alpha \in \indexset_{2, p} \\ \beta \in \indexset_{2, s}}} \Bigg(\sum\limits_{k = 1}^n (-1)^p\prod\limits_{r = s}^{1} \aSclC_{k\beta_{r}}^{\phantom{+}} \prod\limits_{q = p}^{1} \aConjScl{\aSclB_{k\alpha_{q}}}\Bigg) \Bigg(\sum\limits_{k = 1}^n \prod\limits_{q = 1}^{p} \aSclB_{k\alpha_{q}}^{\phantom{+}} \prod\limits_{r = 1}^{s} \aConjScl{\aSclC_{k\beta_{r}}}\Bigg),
\end{align}
where 
\begin{equation} \label{indexset def}
\indexset_{2, 1} = \{1,2\}, \quad \indexset_{2, 2} = \{\{1,2\}\}
\end{equation}
At this point the Hubbard--Stra\-to\-no\-vich transformation (\ref{Hub_C}) is applied. It yields for $p=s=1$ or $p=s=2$
\begin{align}
	\notag
	\exp \Bigg\{& \cumul{p}{s} \Bigg(\sum\limits_{k = 1}^n (-1)^p\prod\limits_{r = s}^{1} \aSclC_{k\beta_{r}}^{\phantom{+}} \prod\limits_{q = p}^{1} \aConjScl{\aSclB_{k\alpha_{q}}}\Bigg) \Bigg(\sum\limits_{k = 1}^n \prod\limits_{q = 1}^{p} \aSclB_{k\alpha_{q}} \prod\limits_{r = 1}^{s} \aConjScl{\aSclC_{k\beta_{r}}}\Bigg) \Bigg\} \\
	\notag
	&= \frac{n}{\pi} \int \exp \Bigg\{
	- \sum\limits_{k = 1}^n \tilde{\sclA}_{\beta\alpha}^{(k,p,s)} \cSclA_{\alpha\beta}^{(p,s)} - \sum\limits_{k = 1}^n \cConjScl{\cSclA}_{\alpha\beta}^{(p,s)} \sclA_{\alpha\beta}^{(k,p,s)} - n\abs{\cSclA_{\alpha\beta}^{(p,s)}}^2 \Bigg\} \\
	\label{HS commuting}
	&\times d\cConjScl{\cSclA}_{\alpha\beta}^{(p,s)}d\cSclA_{\alpha\beta}^{(p,s)},
\end{align}
where
\begin{equation}\label{chi def}
\begin{split}
\tilde{\sclA}_{\beta\alpha}^{(k,p,s)} &= \sqrt{n\cumul{p}{s}}(-1)^p\prod\limits_{r = s}^{1} \aSclC_{k\beta_{r}}^{\phantom{+}} \prod\limits_{q = p}^{1} \aConjScl{\aSclB_{k\alpha_{q}}}; \\
\sclA_{\alpha\beta}^{(k,p,s)}&= \sqrt{n\cumul{p}{s}} \prod\limits_{q = 1}^{p} \aSclB_{k\alpha_{q}} \prod\limits_{r = 1}^{s} \aConjScl{\aSclC_{k\beta_{r}}}.
\end{split}
\end{equation}

Then the combination of \eqref{IR 2}, \eqref{sum order change} and \eqref{HS commuting} 
gives us
\begin{align}
	\CF_2(z_1,z_2) = \left(\frac{n}{\pi}\right)^{5} \int&   e^{-\tr Q^*Q-|v|^2} \prod\limits_{k = 1}^n \mathsf{j}_k \,dQ^* dQ dv^* dv
	\label{after HS}
\end{align}
where 
\begin{align}
	\label{j_k def}
	\begin{split}
		\mathsf{j}_k &= \int \exp \left\{ \cSclB_{k,2} + \cSclB_{k,4} 
        \right\} d\aConjMat{\aVecBt_k} d\aVecBt_k d\aConjMat{\aVecCt_k} d\aVecCt_k,
	\end{split} \\
	\notag
	\cSclB_{k,2} &= - \left(\tr \tilde{\matA}_{k,1} Q + \tr Q^* \matA_{k,1}\right) + \aConjMat{\aVecBt_k}Z\aVecBt_k + \aConjMat{\aVecCt_k}Z^*\aVecCt_k, \\
	\label{S_k,4 def}
	\cSclB_{k,4} &= - \left( v\tilde{\matA}_{k,2}  +  \bar v \matA_{k,2}\right), 
\end{align}
In the formulas above 
$\cMatA$, $\tilde{\matA}_{k,1}$ and $\matA_{k,1}$ are matrices whose entries are 
$\cSclA_{\alpha\beta}^{(1,1)}$, $\tilde{\sclA}_{\beta\alpha}^{(k,1,1)}$ and $\sclA_{\alpha\beta}^{(k,1,1)}$ with $\alpha,\beta=1,2$ respectively, and $v=\cSclA_{\alpha\alpha}^{(2,2)}$,
$\tilde{\matA}_{k,2}=\tilde{\cSclA}_{\alpha\alpha}^{(k,2,2)}$, $\matA_{k,2}=\sclA_{\alpha\alpha}^{(k,2,2)}$ with $\alpha=\{1,2\}$. Therefore, $v\in \mathbb{C}$, and $Q$ is $2\times 2$ complex matrix.


Fortunately, the integral in \eqref{after HS} over $\aMatB$ and $\aMatC$ factorizes. Therefore the integration can be performed over $\aVecBt_k$ and $\aVecCt_k$ separately for every $k$. Lemma \ref{lem:int over psi,tau} provides a corresponding result.
\begin{lem}\label{lem:int over psi,tau}
	Let $\mathsf{j}_k$ be defined by \eqref{j_k def}. Then
	\begin{equation}\label{j_k value}
	\mathsf{j}_k = \det A + bv\det Q^* + b\bar{v}\det Q + b^2 \abs{v}^2,
	\end{equation}
	where $A$ is defined in \eqref{A def} and $b = \sqrt{n\cumul{2}{2}} = \sqrt{\frac{2(n - p)}{pn}}$.
\end{lem}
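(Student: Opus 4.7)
The key structural observation is that each of $\tilde{\matA}_{k,2}$ and $\matA_{k,2}$ is a product of four distinct Grassmann variables (all the $\bar\phi$'s and $\theta$'s at site $k$, resp.\ all the $\phi$'s and $\bar\theta$'s at site $k$), hence nilpotent; being of even Grassmann degree they also commute with one another. Therefore the exponential of $\cSclB_{k,4}$ truncates at second order,
\begin{equation*}
e^{\cSclB_{k,4}} = 1 - v\tilde{\matA}_{k,2} - \bar v\,\matA_{k,2} + v\bar v\,\tilde{\matA}_{k,2}\matA_{k,2},
\end{equation*}
and $\mathsf{j}_k$ decomposes as $\mathsf{j}_k = I_0 - v I_1 - \bar v I_2 + v\bar v I_3$, with each $I_j$ a Berezin integral of $e^{\cSclB_{k,2}}$ against a polynomial insertion. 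The plan is to show $I_0=\det A$, $I_1=-b\det\cMatA^*$, $I_2=-b\det\cMatA$, and $I_3=b^2$.

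To compute $I_0$, consolidate the four quadratic summands of $\cSclB_{k,2}$ into a single quadratic form on the concatenated $4$-vector $\eta_k=(\aVecBt_k^T,\aVecCt_k^T)^T$. A direct rearrangement using \eqref{chi def} gives $-\tr\tilde{\matA}_{k,1}\cMatA=\aConjMat{\aVecBt_k}\cMatA\aVecCt_k$ and $-\tr\cMatA^*\matA_{k,1}=\aConjMat{\aVecCt_k}\cMatA^*\aVecBt_k$, so that $\cSclB_{k,2}=\aConjMat{\eta_k}M\eta_k$ with
\begin{equation*}
M=\begin{pmatrix} Z&\cMatA\\ \cMatA^*&Z^*\end{pmatrix}.
\end{equation*}
The Grassmann Gaussian identity \eqref{Grass int} then yields $I_0=\det(-M)$; a short Schur-complement computation, together with $\det(-X)=\det X$ for $2\times 2$ matrices, identifies $\det(-M)$ with $\det A$.

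For $I_1$ and $I_2$ I would exploit the insertions' structure. Since $\tilde{\matA}_{k,2}$ is proportional to $\theta_{k2}\theta_{k1}\bar\phi_{k2}\bar\phi_{k1}$, the Berezin integral picks out from the expansion of $e^{\cSclB_{k,2}}$ only the coefficient of the complementary monomial $\phi_{k1}\phi_{k2}\bar\theta_{k1}\bar\theta_{k2}$. Among the four quadratic pieces of $\cSclB_{k,2}$, only the off-diagonal cross term $\aConjMat{\aVecCt_k}\cMatA^*\aVecBt_k$ is built entirely from $\bar\theta$'s and $\phi$'s, so only $\tfrac12(\aConjMat{\aVecCt_k}\cMatA^*\aVecBt_k)^2$ contributes; reading off the coefficient of the target monomial in this square yields $\pm\det\cMatA^*$, and sign bookkeeping gives $-v I_1=bv\det\cMatA^*$. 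Symmetrically, only $\tfrac12(\aConjMat{\aVecBt_k}\cMatA\aVecCt_k)^2$ contributes to $I_2$, producing $-\bar v I_2=b\bar v\det\cMatA$. Finally, $\tilde{\matA}_{k,2}\matA_{k,2}$ is proportional, with constant $b^2$, to the full top-degree monomial in all eight Grassmann variables at site $k$, so only the constant term of $e^{\cSclB_{k,2}}$ survives in $I_3$, giving $v\bar v I_3=b^2\abs{v}^2$. Summing the four contributions produces \eqref{j_k value}.

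The main technical nuisance I expect is sign bookkeeping: the reduction $\det(-M)=\det A$, the orientation of the Berezin measure $d\aConjMat{\aVecBt_k}d\aVecBt_k\,d\aConjMat{\aVecCt_k}d\aVecCt_k$, and the permutations of Grassmann monomials required to line up with the insertions all introduce signs that must cancel correctly, particularly when identifying the coefficients of $\det\cMatA$ and $\det\cMatA^*$ inside the squared cross terms.
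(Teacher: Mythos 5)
Your overall route is the same as the paper's: exploit nilpotency to truncate $e^{\cSclB_{k,4}}$ at second order, then evaluate the resulting Berezin integrals against the Gaussian-type bilinear form $e^{\cSclB_{k,2}}$ term-by-term via \eqref{Grass int}. That part of the plan is sound, and the analysis of which cross term feeds which insertion (only $\tfrac12(\aConjMat{\aVecCt_k}\cMatA^*\aVecBt_k)^2$ contributes against $\tilde{\matA}_{k,2}$, etc.) is correct.

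However, there is a concrete sign error that your plan, as written, would not recover from. From \eqref{chi def} with $n\cumul11=1$ one has $(\tilde\matA_{k,1})_{\beta\alpha}=-\theta_{k\beta}\bar\phi_{k\alpha}$, so
\[
\tr\tilde{\matA}_{k,1}\cMatA
=\sum_{\alpha,\beta}\bigl(-\theta_{k\beta}\bar\phi_{k\alpha}\bigr)Q_{\alpha\beta}
=\sum_{\alpha,\beta}\bar\phi_{k\alpha}Q_{\alpha\beta}\theta_{k\beta}
=\aConjMat{\aVecBt_k}\cMatA\aVecCt_k,
\]
and therefore $-\tr\tilde{\matA}_{k,1}\cMatA=-\aConjMat{\aVecBt_k}\cMatA\aVecCt_k$, not $+\aConjMat{\aVecBt_k}\cMatA\aVecCt_k$ as you wrote. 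With the correct sign the bilinear form is $\cSclB_{k,2}=-\aConjMat{\eta_k}A\eta_k$ with $A$ exactly as defined in \eqref{A def}, so $I_0=\det A$ falls out immediately with no Schur manipulation at all. With your $M=\begin{pmatrix}Z&\cMatA\\\cMatA^*&Z^*\end{pmatrix}$ the identity $\det(-M)=\det A$ is simply false: a Schur computation gives $\det(-M)=\det Z\,\det\!\bigl(Z^*-\cMatA^*Z^{-1}\cMatA\bigr)$ versus $\det A=\det Z\,\det\!\bigl(Z^*+\cMatA^*Z^{-1}\cMatA\bigr)$, and already for $Z=zI_2$, $\cMatA=qI_2$ these equal $(|z|^2-|q|^2)^2$ and $(|z|^2+|q|^2)^2$ respectively. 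Conjugation by $\diag(I,-I)$ flips the sign of \emph{both} off-diagonal blocks simultaneously, so it cannot convert $-M$ into $A$, whose off-diagonal blocks carry opposite signs. This isn't a sign that cancels later in the insertions — the cross term enters $I_1,I_2$ only through $(\aConjMat{\aVecCt_k}\cMatA^*\aVecBt_k)^2$, where the sign squares away, so the error is visible purely in $I_0$. You flagged sign bookkeeping as a likely nuisance, which was the right instinct, but the plan as stated leads to the wrong quadratic form and hence the wrong determinant.
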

\begin{proof}
	The integral $\mathsf{j}_k$ is computed by the expansion of the exponent $e^{\cSclB_{k,4}}$ into series. We have
	\begin{equation} \label{psi,theta int}
		\begin{split}
			\mathsf{j}_k ={} &\int \left(1 + \cSclB_{k,4} + \frac{\cSclB_{k,4}^2}{2}\right)e^{ \cSclB_{k,2} } d\aConjMat{\aVecBt_k} d\aVecBt_k d\aConjMat{\aVecCt_k} d\aVecCt_k,
		\end{split}
	\end{equation}
	because $a_{k,4}^3 = 0$. Recalling the definition \eqref{chi def} of $\sclA_{\alpha\beta}^{(k,p,s)}$ and the values \eqref{cumul values} of $\cumul{p}{s}$, one can render $\cSclB_{k,2}$ and $\cSclB_{k,4}$ in the form
	\begin{align}\label{new form of b_2}
	\cSclB_{k,2} &= - \aConjMat{\aVecD_k}A\aVecD_k, \\
    \cSclB_{k,4} &= - b\left(\aSclC_{k2}^{\phantom{+}} \aSclC_{k1}^{\phantom{+}} \aConjScl{\aSclB_{k2}}\aConjScl{\aSclB_{k1}} v+ \bar v \aSclB_{k1}\aSclB_{k2} \aConjScl{\aSclC_{k1}}\aConjScl{\aSclC_{k2}}\right),
	\end{align}
	where $A$ is defined in \eqref{A def}, \begin{equation}\label{Psi def}
    \aVecD_k = \begin{pmatrix}
    \aVecBt_k \\
    \aVecCt_k
    \end{pmatrix}
    \end{equation}
    and $b = \sqrt{n\cumul{2}{2}} = \sqrt{\frac{2(n - p)}{pn}}$. Then we can compute the integral \eqref{psi,theta int} term-wise using \eqref{Grass int}:
    \begin{align*}
        &\int e^{ - \aConjMat{\aVecD_k}A\aVecD_k } d\aConjMat{\aVecD_k} d\aVecD_k = \det A; \\
        &- \int b\left(\aSclC_{k2}^{\phantom{+}} \aSclC_{k1}^{\phantom{+}} \aConjScl{\aSclB_{k2}}\aConjScl{\aSclB_{k1}} v + \bar v \aSclB_{k1}\aSclB_{k2} \aConjScl{\aSclC_{k1}}\aConjScl{\aSclC_{k2}}\right) e^{ - \aConjMat{\aVecD_k}A\aVecD_k } d\aConjMat{\aVecD_k} d\aVecD_k = b(v\det Q^* + \bar v\det Q); \\
        &\frac{b^2}{2}\int \left(\aSclC_{k2}^{\phantom{+}} \aSclC_{k1}^{\phantom{+}} \aConjScl{\aSclB_{k2}}\aConjScl{\aSclB_{k1}} v  + \bar v\aSclB_{k1}\aSclB_{k2} \aConjScl{\aSclC_{k1}}\aConjScl{\aSclC_{k2}}\right)^2 e^{ - \aConjMat{\aVecD_k}A\aVecD_k } d\aConjMat{\aVecD_k} d\aVecD_k = \\
        &\hspace{5 cm}  b^2 \int \abs{v}^2 \aSclC_{k2}^{\phantom{+}} \aSclC_{k1}^{\phantom{+}} \aConjScl{\aSclB_{k2}}\aConjScl{\aSclB_{k1}} \aSclB_{k1}\aSclB_{k2} \aConjScl{\aSclC_{k1}}\aConjScl{\aSclC_{k2}} d\aConjMat{\aVecD_k} d\aVecD_k = b^2 \abs{v}^2.
    \end{align*}
\end{proof}
A substitution of \eqref{j_k value} into \eqref{after HS}  gives \eqref{IR result}.
\section{Saddle-point analysis}
In order to perform the asymptotic analysis of (\ref{IR result}) let us change the variables $$Q = UT V^*, \quad v \to v \det UV^*,$$ 
where $Q = UT V^*$ is the singular value decomposition of the matrix $Q$, i.e. $$T = \diag\{t_j\}_{j = 1}^2, \quad t_j \ge 0, \quad U, V \in U(2).$$  The  Jacobian  of such change is $
    2\pi^{4}\Vanddet^2(T^2) \prod\limits_{j = 1}^2 t_j$ (see e.g.\ \cite{Hu:63}), and hence we obtain
	\begin{equation}\label{IR SVD}
	\begin{split}
	\CF_2 (z_1,z_2) = \frac{2n^5}{\pi} &\int\limits_\idom \Vanddet^2(T^2) t_1t_2 \exp\left\{nf(T, U, V, v)\right\} \\
	&\times d\mu(U) d\mu(V) dT d\bar v d v,
	\end{split}
	\end{equation}
	where $$\idom = \{(T, U, V, v) \mid t_j \ge 0,\, j = 1, 2,\, U, V \in U(2)\}, \quad dT = \prod\limits_{j = 1}^2 dt_j,$$ $\mu$ is a Haar measure on $U(2)$, and
	\begin{align}
        \label{f def}
        f(T, U, V, v) &= f_0(T, v) + \frac{1}{\sqrt{n}}f_r(T, U, V, v); \\
		\label{f_0 def}
		f_0(T, v) &= - \tr T^2 - |v|^2 + \log h_0(T, v); \\
		\label{h_0 def}
		h_0(T, v) &= \prod\limits_{j = 1}^2 (\abs{z_0}^2 + t_j^2) + 2bt_1t_2 \Re v + b^2 |v|^2; \\
		\label{f_r def}
		f_r(T, U, V, v) &= \sqrt{n}(\hat{f}(UT V^*, v \det UV^*) - f_0(T, v))
	\end{align}
 with $\hat f$ of (\ref{tilde f def}). Notice that $h$ of (\ref{h def}) in new variables takes the form
 \begin{equation}\label{h_v}
 h(Q,v) = \det A + bv\det T^* + b\bar v\det T + b^2 \abs{v}^2. 
 \end{equation}
\subsection{Saddle-points}
 
Since we use the Laplace method to analyse (\ref{IR SVD}), we are interested in the saddle-points where the global maximum of $f_0(T,v)=f_0(T,x,y)$ with $v=x+iy$
is achieved.

 We start with the following simple lemma
\begin{lem}\label{lem:max of f_0}
	The function $f_0 \colon [0, \infty)^{2} \times \Compl \to \R$ defined by \eqref{f_0 def} attains its global maximum value. Moreover, if $(\hat{T}, \hat{v})$ is a point of the global maximum then $\hat{t}_1 = \hat{t}_2$.
\end{lem}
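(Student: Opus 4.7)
The plan is to break the lemma into three ingredients: existence by compactness, reducing to $v \ge 0$ real at the maximum by monotonicity, and proving $\hat t_1 = \hat t_2$ by an AM--GM symmetrization. The whole argument rests on the identity
\[
h_0(T, v) = |z_0|^4 + |z_0|^2(t_1^2 + t_2^2) + (t_1 t_2 + b\Re v)^2 + b^2(\Im v)^2,
\]
which follows from \eqref{h_0 def} by completing the square in $(t_1 t_2, b\Re v)$ together with $b\Im v$; it makes the signs of everything transparent and will be used throughout.

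I would first establish existence. The identity above shows $h_0 \ge 0$ and is a polynomial in the coordinates of $(T, v)$ of bounded degree, so $\log h_0$ grows only logarithmically at infinity whereas $-\tr T^2 - |v|^2$ decays quadratically. Hence $f_0 \to -\infty$ as $t_1^2 + t_2^2 + |v|^2 \to \infty$. The (possibly empty) zero set of $h_0$ is contained in the lower-dimensional set $\{z_0 = 0\}$, and $f_0 = -\infty$ there, so it does not obstruct the attainment of the supremum on $[0,\infty)^2 \times \Compl$.

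For the symmetry, I would take an arbitrary global maximum $(\hat T, \hat v)$ and argue in two steps. First, replacing $\hat v$ by $|\hat v|$ leaves $\tr T^2$ and $|v|^2$ unchanged and increases $h_0$ by $2b\hat t_1 \hat t_2(|\hat v| - \Re \hat v) \ge 0$, so $\hat v$ may be taken real and non-negative. Second, with $\hat v \ge 0$ real, set $t^* = \sqrt{(\hat t_1^2 + \hat t_2^2)/2}$ and $T^* = \diag(t^*, t^*)$, so $\tr T^{*2} = \tr \hat T^2$. Using the identity above with $\Im \hat v = 0$, a short computation yields
\[
h_0(T^*, \hat v) - h_0(\hat T, \hat v) = \tfrac{(\hat t_1 - \hat t_2)^2}{2}\left(\tfrac{(\hat t_1 + \hat t_2)^2}{2} + 2b\hat v\right) \ge 0,
\]
which is strict whenever $\hat t_1 \ne \hat t_2$ (the second factor is then positive since $\hat t_1 + \hat t_2 > 0$). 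Hence $f_0(T^*, \hat v) > f_0(\hat T, \hat v)$ unless $\hat t_1 = \hat t_2$, so maximality forces the symmetry. The main subtlety to watch is the potential vanishing of $h_0$ when $z_0 = 0$, but since $f_0 = -\infty$ on this lower-dimensional set, it does not affect either the existence or the symmetrization step.
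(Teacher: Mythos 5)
Your proof is correct and follows essentially the same approach as the paper: both establish existence by noting $f_0\to-\infty$ at infinity, and both prove the symmetry $\hat t_1=\hat t_2$ by replacing $(t_1,t_2)$ with $(t,t)$, $t^2=(t_1^2+t_2^2)/2$, and $\Re v$ (resp.\ $v$) with its absolute value, then showing $h_0$ strictly increases while $-\tr T^2-|v|^2$ is unchanged. The only difference is cosmetic: the paper invokes AM--GM and QM--GM inequalities directly, whereas you first complete the square in $h_0$ and compute the increment explicitly, which makes the strictness of the inequality slightly more transparent.
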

\begin{proof}
    The function $f_0(T, v)$ is continuous and
    \begin{equation*}
        \lim_{\substack{T \to +\infty \\ v \to \infty}} f_0(T, v) = -\infty.
    \end{equation*}
    Therefore, $f_0$ attains its global maximum. Next, by AM-GM and QM-GM inequalities
    \begin{align}
        \label{AMGM}
        (\abs{z_0}^2 + t_1^2)(\abs{z_0}^2 + t_2^2) &\le \left(\abs{z_0}^2 + \frac{t_1^2 + t_2^2}{2}\right)^2, \\
        \notag
        2bt_1t_2x &\le b(t_1^2 + t_2^2)\abs{x}.
    \end{align}
    Note that the inequality \eqref{AMGM} is strict if $t_1 \ne t_2 \ge 0$. Hence, for $t_1 \ne t_2$ we have
    \begin{equation*}
    \begin{split}
        f_0(T, v) = - \tr T^2 - \abs{v}^2 + \log \left[ (\abs{z_0}^2 + t_1^2)(\abs{z_0}^2 + t_2^2) + 2bt_1t_2 x + b^2 \abs{v}^2 \right] \\
        < - \tr T^2 - \abs{v}^2 + \log \left[ \left(\abs{z_0}^2 + \frac{t_1^2 + t_2^2}{2}\right)^2 + b(t_1^2 + t_2^2)\abs{x} + b^2 \abs{v}^2 \right] = f_0(tI, \abs{x} + iy),
    \end{split}
    \end{equation*}
    where $t = \sqrt{\frac{t_1^2 + t_2^2}{2}}$. Thus, a point with $t_1 \ne t_2$ can not be a global maximum point of $f_0$.
\end{proof}
Now we are ready to prove
\begin{prop}\label{p:main_saddle}
    Function $f_0(T,v)=f_0(t_1,t_2,x,y)$ with $v=x+iy$ of (\ref{f_0 def}) may attain its global maximum only at the saddle-points of these three types:
\begin{enumerate}
    \item[I.] $*$-saddle point: $t_1=t_2=t_*$, $x=x_*$, $y=0$.
    
Here $x_*=\alpha b$ and $\alpha\in [0,1]$ is a solution of
\begin{align}\label{alpha}
    2\alpha (1-\alpha)^2b^2+1-\alpha=|z_0|^2,
\end{align}
such that
\begin{equation}\label{der1}
     (6\alpha^2-8\alpha+2)b^2-1\le  0,
   \end{equation}
and
\begin{align}\label{t*}
    &t_*^2=\dfrac{\alpha |z_0|^2}{1-\alpha}-\alpha b^2\ge 0.
\end{align}
The value of $f_0$ at this point is
    \begin{equation}\label{v_I}
      F_{I}(\alpha, b,|z_0|^2)=f_0(t_1,t_2,x,y)\Big|_{I}=-\alpha^2 b^2-\dfrac{2\alpha |z_0|^2}{1-\alpha}+2\alpha b^2+\log \dfrac{|z_0|^2}{1-\alpha}.
    \end{equation}
    
    \item[II.] $v$-saddle points: $t_1=t_2=0$, $|v|^2=x^2+y^2=1-|z_0|^4/b^2$

    The value of $f_0$ at this point is
    \begin{equation}\label{v_II}
      F_{II}(b,|z_0|^2)=f_0(t_1,t_2,x,y)\Big|_{II}=-(1-\dfrac{|z_0|^4}{b^2})+\log b^2.  
    \end{equation}
    \item[III.] Zero saddle-point: $t_1=t_2=x=y=0$. 

    The value of $f_0$ at this point is
    \begin{equation}\label{v_III}
     F_{III}(b,|z_0|^2)= f_0(t_1,t_2,x,y)\Big|_{III}=\log |z_0|^4.  
    \end{equation}
\end{enumerate}
\end{prop}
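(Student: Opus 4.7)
The plan is to apply Lemma \ref{lem:max of f_0} to reduce to the 3-dimensional slice $\{t_1 = t_2 = t\}$ and then solve the critical point system for $f_0(t, x, y)$ with $v = x + iy$. On this slice
\begin{equation*}
h_0(t, v) = (|z_0|^2 + t^2)^2 + 2bt^2 x + b^2(x^2 + y^2),
\end{equation*}
and direct differentiation produces the three saddle equations
\begin{equation*}
y\,(h_0 - b^2) = 0, \qquad x\,(h_0 - b^2) = bt^2, \qquad t\bigl(|z_0|^2 + t^2 + bx - h_0\bigr) = 0.
\end{equation*}
A clean case analysis on which factors vanish will yield exactly the three types.

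If $t = 0$, the second equation reduces to $x(h_0 - b^2) = 0$ (since $h_0|_{t=0} = |z_0|^4 + b^2|v|^2$), and together with the first equation this gives either $h_0 = b^2$ (the type II family, $|v|^2 = 1 - |z_0|^4/b^2$) or $x = y = 0$ (the type III point). If instead $t \neq 0$, I would first show that $y = 0$: assuming $y \neq 0$ would force $h_0 = b^2$ and $|z_0|^2 + t^2 + bx = b^2$ simultaneously, whence the $x$-equation reads $bt^2 = 0$, a contradiction. The AM--GM estimate $2bt^2 x \leq 2bt^2|x|$ used in Lemma \ref{lem:max of f_0} forces $x \geq 0$ at any maximum, so I can parametrise $x = \alpha b$ with $\alpha \geq 0$; substituting the $t$-equation into the $x$-equation gives $t^2 = \alpha|z_0|^2/(1-\alpha) - \alpha b^2$ and $h_0 = |z_0|^2/(1-\alpha)$, and plugging these back into the defining identity for $h_0$ yields the relation $|z_0|^2 = 1 - \alpha + 2\alpha(1-\alpha)^2 b^2$ after simplification. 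The nonnegativity of $t^2$ then forces $\alpha \leq 1$.

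The values $F_I, F_{II}, F_{III}$ follow from direct substitution into $f_0 = -2t^2 - (x^2 + y^2) + \log h_0$. The remaining task is to derive the second-order inequality (\ref{der1}), which I would do as follows: set $g(\alpha) := 2\alpha(1-\alpha)^2 b^2 + 1 - \alpha$, so that the type I equation is $g(\alpha) = |z_0|^2$ with $g'(\alpha) = (6\alpha^2 - 8\alpha + 2)b^2 - 1$. For $b^2 > 1/2$ the function $g$ is not monotone on $[0,1]$ and $g(\alpha) = |z_0|^2$ can have multiple roots; only those with $g'(\alpha) \leq 0$ correspond to local maxima. The main obstacle is to extract this condition from the full Hessian of $f_0$ at a type I critical point: after eliminating the (negative-definite) $t$ and $y$ directions, the effective second derivative along the one-parameter family of type I saddle points is proportional to $-g'(\alpha)$, so negative semi-definiteness of the Hessian reduces exactly to (\ref{der1}).
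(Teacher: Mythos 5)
Your classification of critical points follows the paper's path: invoke Lemma~\ref{lem:max of f_0} to reduce to the slice $t_1=t_2=t$, write the three critical-point equations in the cleared-denominator form $y(h_0-b^2)=0$, $x(h_0-b^2)=bt^2$, $t(|z_0|^2+t^2+bx-h_0)=0$, and split on whether $t$ and $y$ vanish. That part is correct and essentially identical to the paper's case analysis, including the derivation of $t^2=\alpha|z_0|^2/(1-\alpha)-\alpha b^2$, $h_0=|z_0|^2/(1-\alpha)$, the cubic \eqref{alpha}, and the substitutions giving $F_I$, $F_{II}$, $F_{III}$.

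The gap is in how you extract condition~\eqref{der1}. The paper does \emph{not} obtain it from the Hessian. In Lemma~\ref{l:*} it introduces the one-variable function $\tilde F(\alpha)=\alpha^2b^2-\frac{|z_0|^2}{1-\alpha}+\log\frac{|z_0|^2}{1-\alpha}+2|z_0|^2-1$, which agrees with $F_I$ at every root of the cubic $p(\alpha)=g(\alpha)-|z_0|^2$ and whose derivative is exactly $p(\alpha)/(1-\alpha)^2$. It then \emph{compares values}: when $p$ has three real roots, $\tilde F$ has a local-min/local-max/local-min pattern at them, so the middle root (the one with $p'\le 0$, i.e.\ \eqref{der1}) gives the strictly largest $F_I$; the outer roots therefore cannot carry the global maximum regardless of whether they are local maxima of $f_0$. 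Your proposal instead asserts that ``after eliminating the (negative-definite) $t$ and $y$ directions, the effective second derivative along the one-parameter family of type~I saddle points is proportional to $-g'(\alpha)$''. This is not computed and not obviously true: for fixed $(b,|z_0|^2)$ there is no one-parameter family of type-I critical points (they are the isolated roots of $p$), and the second derivative of $f_0$ along the natural curve $\alpha\mapsto(t(\alpha),t(\alpha),\alpha b,0)$ is \emph{not} simply $\tilde F''(\alpha)$ --- along that curve $h_0=u(u-2\alpha(1-\alpha)b^2)\ne u$ away from the roots, so $f_0$ restricted to the curve and $\tilde F$ agree only at roots and in first derivative there, not in second. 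The Schur-complement claim would need an explicit computation (the $4\times4$ Hessian $f_0''(t_*I,x_*)$ is written out in Lemma~\ref{lem:est for Re f}, so it is doable), but as a blind assertion it does not establish~\eqref{der1}. You also do not address the case when the cubic has a single real root, where the paper points out that $t^2<0$ and so no type-I point exists at all.
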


\begin{proof}
    
Taking the derivatives of $f_0$ of (\ref{f_0 def}), we get the following system of equations for the stationary points
\begin{align*}
\begin{cases}
    -x+\dfrac{bt_1t_2+b^2x}{h_0(T,v)}=0\\
    -y+\dfrac{b^2y}{h_0(T,v)}=0\\
    -t_1+\dfrac{t_1(|z_0|^2+t_2^2)+bxt_2}{h_0(T,v)}=0\\
    -t_2+\dfrac{t_2(|z_0|^2+t_1^2)+bxt_1}{h_0(T,v)}=0
\end{cases}
\end{align*}
According to Lemma \ref{lem:max of f_0}, we want to consider only points with $t_1=t_2=t$, so the system transforms to 
\begin{align}\label{system}
\begin{cases}
    -x+\dfrac{bt^2+b^2x}{h_0(t,v)}=0\\
    -y+\dfrac{b^2y}{h_0(t,v)}=0\\
    -t+\dfrac{t(|z_0|^2+t^2)+bxt}{h_0(t,v)}=0\\
\end{cases}
\end{align}
To find the solutions of (\ref{system}), consider first the case $y=0$. 
 Then (\ref{system}) can be rewritten as
\begin{align*}
\begin{cases}
    -x+\dfrac{bt^2+b^2x}{h_0(T,v)}=0\\
    -t+\dfrac{t(|z_0|^2+t^2)+bxt}{h_0(T,v)}=0
\end{cases}
\end{align*}
Clearly, if $t=0$, then we can have $x=0$, or
\begin{equation*}
h_0(0,v)=b^2\Longrightarrow x^2=1-\dfrac{|z_0|^4}{b^2}.
\end{equation*}
which gives two solutions:
\begin{align}\label{sol2}
    &t_1=t_2=t=0,\quad x=y=0,\\ \label{sol4_1}
    &t_1=t_2=t=0,\quad x^2=1-\dfrac{|z_0|^4}{b^2}, \quad y=0.
\end{align}
If $t\ne 0$, we get
\begin{equation}\label{h_*1}
h_0(T,v)=bx+t^2+|z_0|^2,
\end{equation}
and hence from the first equation of (\ref{system})
\begin{equation}\label{t_*}
bt^2+b^2x=xh_0(T,v)\Longrightarrow t^2=\dfrac{x|z_0|^2}{b-x}-bx
\end{equation}
if $x\ne b$. Substituting this to the previous equation we get
\[
bx+t^2+|z_0|^2=\dfrac{b|z_0|^2}{b-x},
\]
and hence
\begin{equation}\label{B_*}
h_0(T,v)=\dfrac{b^2|z_0|^4}{(b-x)^2}-2bx|z_0|^2=\dfrac{b|z_0|^2}{b-x}\Longrightarrow 2x(b-x)^2+b-x=|z_0|^2b,
\end{equation}
if $b\ne 0$ and $|z_0|\ne 0$. The last one is a cubic equation with respect to $x$, which may have three real roots. 
It is convenient to write
\begin{equation}\label{alp}
 x=\alpha b,
\end{equation}
and hence from (\ref{t_*}), (\ref{B_*}) we get
\begin{align}\label{alp_expr}
  &t^2=  \dfrac{\alpha |z_0|^2}{1-\alpha}-\alpha b^2,\quad h_0(T,v)=\dfrac{ |z_0|^2}{1-\alpha},\\ \notag
  &2\alpha(1-\alpha)^2b^2+1-\alpha=|z_0|^2.
\end{align}
Now we need
\begin{lem}\label{l:*}
    Among all stationary points corresponding to the real roots of
    the equation
    \begin{equation}\label{cub_eq}
    2\alpha(1-\alpha)^2b^2+1-\alpha=|z_0|^2,
    \end{equation}
    the global maximum of $f_0$ can be achieved only at the saddle point
    corresponding to the solution $\alpha$ of (\ref{cub_eq}) lying on $[0,1]$ and such that
   \begin{equation}\label{der}
     (6\alpha^2-8\alpha+2)b^2-1\le  0. 
   \end{equation}
   The saddle-point corresponding to this solution exists if and
   only if  one of the following holds
   \begin{enumerate}
       \item[(i)] $b\le \tfrac{1}{\sqrt2}$, $|z_0|^2\le 1$;
       \item[(ii)] $b\in (\tfrac{1}{\sqrt2}, 1]$, $|z_0|^2\le z_-(b),$ where
       \begin{equation}\label{3root}
    z_-(b)=\dfrac{4b^2+9}{27}+\dfrac{4b^2+6}{27}\sqrt{1+\tfrac{3}{2b^2}};
    \end{equation} 
    \item[(iii)] $b\in (1,\sqrt{\tfrac{5+\sqrt5}{5}})$, $|z_0|^2\in [\tfrac{b^2-b\sqrt{2-b^2}}{2}, z_-(b)]$; 
    \item [(iv)] $b\in [\sqrt{\tfrac{5+\sqrt5}{5}},\sqrt 2]$, $|z_0|^2\in [\tfrac{b^2-b\sqrt{2-b^2}}{2}, \tfrac{b^2+b\sqrt{2-b^2}}{2}]$;
   \end{enumerate}
   In addition, at this saddle-point 
    \begin{equation}\label{b,alp_in}
    b^2\le \dfrac{1}{1-2\alpha(1-2\alpha)},
    \end{equation}
    \begin{equation}\label{h_*}
      h_*=h_0(t_*,x_*,0)=\dfrac{|z_0|^2}{1-\alpha}=bx_*+t_*^2+|z_0|^2.  
    \end{equation}
\end{lem}
\begin{proof}
Notice that as soon as (\ref{cub_eq}) is satisfied
we get
\begin{align*}
    f_0(T,v)&=-2t^2-x^2+\log h_0(T,v)=\alpha^2b^2-\dfrac{2\alpha|z_0|^2}{1-\alpha}+2b^2\alpha(1-\alpha)+\log\dfrac{|z_0|^2}{1-\alpha}\\
    &=\alpha^2b^2-\dfrac{|z_0|^2}{1-\alpha}+\log \dfrac{|z_0|^2}{1-\alpha}+2|z_0|^2-1.
\end{align*}
Here we used (\ref{alp}) -- (\ref{alp_expr}) and
\[
2b^2\alpha(1-\alpha)=\dfrac{|z_0|^2}{1-\alpha}-1.
\]
Taking the derivative with respect to $\alpha$ of the function above we get
\begin{equation}\label{der_f_0}
2\alpha b^2-\dfrac{|z_0|^2}{(1-\alpha)^2}+\dfrac{1}{1-\alpha}=\dfrac{2\alpha(1-\alpha)^2b^2+1-\alpha-|z_0|^2}{(1-\alpha)^2}.
\end{equation}
Suppose (\ref{cub_eq}) has three real roots. Then, according to (\ref{der_f_0}), two of them are the local minimums, and the middle one is a local maximum of the function above, and hence
the value of $f_0$ in the saddle-point associated to the middle root of (\ref{cub_eq}) is greater then those of other two.
Notice that if $p(\alpha)=2\alpha(1-\alpha)^2b^2+1-\alpha-|z_0|^2$, then
\[
p(0)=1-|z_0|^2,\quad p(1)=-|z_0|^2,
\]
and hence the biggest root of $p(\alpha)$ is always grater than $1$.
If $|z_0|^2\le 1$, then (\ref{cub_eq}) must has a non-positive root and the root between $0$ and $1$, and so the middle root belongs to $[0,1]$. If $|z_0|^2>1$, then one root is still bigger $1$, and two other roots, if exist, must be of the same sign (since the product of three roots is $(|z_0|^2-1)/2b^2$).  
It is easy to see that the point with negative $x$ and $t\ne 0$ cannot be a global maximum of $f_0$ since changing the sign of $x$ evidently increases the value of $f_0$, thus we interested only in the case when all three roots are positive. 
Notice that both of them should be at the same side of $1$, and since the sum of all three roots is $2$ according to Vieta's formula, it implies that both of
them must lie on $[0,1]$. In addition, at the middle root $p'$ must be negative which gives (\ref{der1}).

It remains to consider the case when $p(\alpha)$ has only one real root. But in this case no saddle-point corresponding to the root exists since we get $t^2<0$ (see (\ref{alp_expr})).

Thus, the saddle-point corresponding to the solution of (\ref{cub_eq}) exists and can be a global maximum of $f_0$ only if (\ref{cub_eq}) has a solution $\alpha\in[0,1]$ and this solution satisfies (\ref{der1}) and
\begin{align}\label{t_cond}
  \dfrac{\alpha|z_0|^2}{1-\alpha}\ge \alpha b^2\Longrightarrow 2\alpha(1-\alpha)b^2+1\ge b^2 \Longleftrightarrow \alpha\in [\tfrac{b-\sqrt{2-b^2}}{2b},\tfrac{b+\sqrt{2-b^2}}{2b}].
\end{align}
The last condition guarantees the existence of $t_*$ (see (\ref{t*})) and implies (\ref{b,alp_in}).
Notice that since $0\le 2\alpha(1-\alpha)\le 1/2$, we immediately get that $b\le \sqrt 2$ and (\ref{t_cond}) is always satisfied if $b\le 1$.

 From the discussion above, if $|z_0|\le 1$, then the solution $\alpha$ of (\ref{cub_eq}) satisfying (\ref{der1}) exists. Notice that
\begin{equation}\label{alp_pm}
p'(\alpha)=0\Longleftrightarrow \alpha=\alpha_\pm=\dfrac{2\pm \sqrt{1+\tfrac{3}{2b^2}}}{3}.
\end{equation}
Therefore, since $\alpha_+>1$, if $|z_0|>1$, then two positive solutions on $[0,1]$ exist if 
\[
\alpha_->0,\quad p(\alpha_-)\ge 0
\]
which is equivalent to $b>\tfrac{1}{\sqrt 2}$ and $|z_0|^2\le z_-(b)=p(\alpha_-) + \abs{z_0}^2$ which gives $z_-(b)$ of (\ref{3root}). 
Therefore, if $b\le 1$, then $*$-saddle point exists if and only if (i) or (ii) holds,
and if $b\in (1,\sqrt2]$ and $|z_0|^2\le z_-(b)$, then the solution $\alpha$ of (\ref{cub_eq}) satisfying (\ref{der1}) exists.

If $b\in (1,\sqrt2]$, then in order to satisfy (\ref{t_cond}) we must have 
\begin{equation}\label{al_int}
  \alpha\in [\tfrac{b-\sqrt{2-b^2}}{2b},\tfrac{b+\sqrt{2-b^2}}{2b}].  
\end{equation}
Notice that if $b\in [\sqrt{\tfrac{5+\sqrt5}{5}},\sqrt2]$, then 
\[
[\tfrac{b-\sqrt{2-b^2}}{2b},\tfrac{b+\sqrt{2-b^2}}{2b}]\subset [\alpha_-,1],
\]
and hence $p'(\alpha)<0$ for all $\alpha$ satisfying (\ref{al_int}). Therefore, if
\[
|z_0|^2\in [p(\tfrac{b+\sqrt{2-b^2}}{2b}),p(\tfrac{b-\sqrt{2-b^2}}{2b})]=[\tfrac{b^2-b\sqrt{2-b^2}}{2}, \tfrac{b^2+b\sqrt{2-b^2}}{2}],
\]
then the solution $\alpha_*$ of (\ref{cub_eq}) satisfying (\ref{der1}) and (\ref{al_int}) exists, and so does $*$-saddle point, which gives (iv). 

If $b\in [1,\sqrt{\tfrac{5+\sqrt5}{5}})$, then
\[
\tfrac{b-\sqrt{2-b^2}}{2b}<\alpha_-<\tfrac{b+\sqrt{2-b^2}}{2b}<1,
\]
and so possible values of $|z_0|^2$ should correspond to $\alpha\in [\alpha_-,\tfrac{b+\sqrt{2-b^2}}{2b}]$ which gives (iii).

The expression for $h_*$ can be obtained straightforwardly from (\ref{alp_expr}).
\end{proof}
If $b=0$, then the first equation of (\ref{system}) implies $x=0$, and then
\[
h_0(T,v)=t^2+|z_0|^2 \Longrightarrow (t^2+|z_0|^2)^2=t^2+|z_0|^2\Longrightarrow t=\sqrt{1-|z_0|^2}
\]
for $|z_0|\le 1$. 

If $|z_0|=0$, then
\[
h_0(T,v)=t^2+bx\Longrightarrow (t^2+bx)^2=t^2+bx\Longrightarrow t^2+bx=1,
\]
and hence the first equation of (\ref{system}) gives $x=b$, thus
\[
t_1=t_2=\sqrt{1-b^2},\quad x=b,\quad y=0
\]
for $b\le 1$. Notice also that $x=b\ne 0$ always implies 
\[
t^2+b^2=t^2+b^2+|z_0|^2\Longrightarrow |z_0|=0.
\]
Thus, the case $t_1=t_2=t\ne 0$, $y=0$ for $|z_0|\ne 0$, $b\ne 0$ gives the solution
\begin{align}\label{sol3}
t_1=t_2=t_*,\quad x=x_*=\alpha b,\quad y=0    
\end{align}
where $\alpha\in [0,1]$ and 
\[
2\alpha(1-\alpha)^2b^2+1-\alpha=|z_0|^2,\quad t_*^2=\dfrac{\alpha |z_0|^2}{1-\alpha}-\alpha b^2
\]
if such solution exists and (\ref{b,alp_in}) holds (see Lemma \ref{l:*}).

If $z_0=0$, but $b\le 1$, then the solution takes the form
\[
t_1=t_2=\sqrt{1-b^2},\quad x=b,\quad y=0.
\]
Notice that the last solution can be considered as a limiting case of (\ref{sol3}), so one can consider (\ref{sol3}) also for $|z_0|=0$.
Same is true for the solution
\[
t_1=t_2=\sqrt{1-|z_0|^2},\quad x=y=0
\]
obtained for $b=0$, $|z_0|\le 1$.

Assume now that $y\ne 0$. Then the second equation of (\ref{system})
gives 
\[
h_0(T,v)=b^2,
\]
and hence the first equation gives $t=0$.
Thus the previous equation implies
\[
|z_0|^4+b^2(x^2+y^2)=b^2\Longrightarrow x^2+y^2=1-\dfrac{|z_0|^4}{b^2}
\]
if $|z_0|^2\le b$.
Hence, another family of solutions takes the form
\begin{equation}\label{sol4}
 t_1=t_2=0,\quad x^2+y^2=1-\dfrac{|z_0|^4}{b^2}   
\end{equation}
if $|z_0|^2\le b$. Notice that this includes (\ref{sol4_1}), so one can include here $y=0$. The equation (\ref{h_*}) follows from (\ref{h_*1}) and (\ref{alp_expr}).

\end{proof}

\subsection{Main saddle-points}
Depending on the values of $|z_0|$ and $b$, the main contribution to (\ref{IR SVD}) is given by the different saddle-points.

Notice first that if $|z_0|^2\le b$ and so the $v$-saddle point exists, then 
\begin{equation}\label{II,III_in}
    F_{II}(b,|z_0|^2)\ge F_{III}(b,|z_0|^2).
\end{equation}

Indeed, according to (\ref{v_II}) -- (\ref{v_III}) 
\[
F_{II}(b,|z_0|^2)- F_{III}(b,|z_0|^2)=\dfrac{|z_0|^4}{b^2}-1-\log \dfrac{|z_0|^4}{b^2}\ge 0. 
\]
Compare now the values at $*$-saddle point and $v$-saddle point:
\begin{lem}\label{l:I>II}
Let $ |z_0|^2\le b$ (i.e. $v$-saddle point exists).
Then
\begin{enumerate}
    \item[(i)] If one of the following holds
    \begin{itemize}
        \item $b\le 1$, $|z_0|\le 1$
        \item $b\in (1, \sqrt2]$ and
    \[  \dfrac{b^2-b\sqrt{2-b^2}}{2}\le |z_0|^2\le 1,\]
    \end{itemize}
    then $*$-saddle point exists and 
\[
F_{I}(\alpha, b,|z_0|^2)\ge F_{II}(b,|z_0|^2),
\]
where $F_I, F_{II}$ are defined in (\ref{v_I}) -- (\ref{v_II}).
\item[(ii)] If $b\in (1, \sqrt2]$, but $|z_0|> 1$, then 
there exists a curve $z_1(b)$ such that $*$-saddle point exists for $1\le |z_0|^2\le \min(z_1(b),b)$ and
\begin{align}\label{I>II}
F_{I}(\alpha, b,|z_0|^2)\ge F_{II}(b,|z_0|^2),\quad 1\le |z_0|^2\le \min(z_1(b),b).
\end{align}
If $ \min(z_1(b),b)\le |z_0|^2\le b$, then or $*$-point does not exist, or 
\begin{align}\label{I<II}
&F_{I}(\alpha, b,|z_0|^2)< F_{II}(b,|z_0|^2).
\end{align}

\end{enumerate}
Therefore, the curve $\gamma_2$ on Figure 1 is 
\[
|z_0|^2=z_1(b), \quad b: z_1(b)>b.
\]
It coincides with 
\[
|z_0|^2=\dfrac{b^2+b\sqrt{2-b^2}}{2}
\]
for all $b\in [\sqrt{\tfrac{5+\sqrt 5}{5}};\sqrt 2]$.
\end{lem}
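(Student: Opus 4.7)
The plan is to analyze the difference
\[
G(b,\lambda) := F_I - F_{II} = \alpha^2 b^2 - \frac{\lambda}{1-\alpha} + \log\frac{\lambda}{(1-\alpha)b^2} + 2\lambda - \frac{\lambda^2}{b^2},
\]
with $\lambda := |z_0|^2$ and $\alpha=\alpha(b,\lambda)$ the middle root of (\ref{cub_eq}) satisfying (\ref{der1}); the formula combines (\ref{v_II}) with the reformulation of $F_I$ derived in the proof of Lemma~\ref{l:*}. Differentiating $F_I$ at its $(t,x)$-stationary point (the $d\alpha/d\lambda$ contribution drops by the envelope theorem, since $\alpha$ merely reparametrises $x$) produces the compact expression
\[
\partial_\lambda G = 2 - \frac{1}{1-\alpha} + \frac{1}{\lambda} - \frac{2\lambda}{b^2}.
\]

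The first step is to evaluate $G$ on $\lambda=1$, where (\ref{cub_eq}) factors as $\alpha[2(1-\alpha)^2 b^2 -1]=0$, the middle root is $\alpha=1-1/(\sqrt 2\,b)$, and
\[
G(b,1) = b^2 - 2\sqrt 2\,b + \tfrac{5}{2} - b^{-2} + \tfrac12\log 2 - \log b.
\]
Elementary single-variable calculus (critical points, endpoint values, second-derivative test) shows $G(b,1)\ge 0$ on $[1,\sqrt 2]$, with a double zero at $b=\sqrt 2$ and a strictly positive interior maximum. The second step is to verify $\partial_\lambda G<0$ in the interior of each existence region (i)--(iv) of Lemma~\ref{l:*}; substituting $1/(1-\alpha)$ from (\ref{cub_eq}) reduces this to a polynomial inequality in $(\alpha,b)$, checked using the admissibility constraints (\ref{der1}) and (\ref{b,alp_in}).

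Part~(i) then follows: for $b\in(1,\sqrt 2]$ the monotonicity combined with $G(b,1)\ge 0$ yields $G(b,\lambda)\ge 0$ on any admissible $\lambda\le 1$; for $b\le 1$ (where $\lambda\le b$ forces $\lambda<1$) I use the boundary $\lambda=b$ on which $F_{II}(b,b)=\log b^2=F_{III}(b,b)$, so $G(b,b)\ge 0$ reduces to $F_I\ge F_{III}$ which is the content of Lemma~\ref{l:I>III}, and monotonicity in $\lambda$ propagates the inequality to $\lambda<b$. For part~(ii), $G(b,\cdot)$ is continuous, strictly decreasing and non-negative at $\lambda=1$, so either it remains non-negative up to the end of the $*$-saddle existence interval (in which case (\ref{I<II}) is vacuous and (\ref{I>II}) holds throughout) or it vanishes at a unique $z_1(b)>1$, proving (\ref{I>II})--(\ref{I<II}).

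To identify $z_1(b)$ on $b\in[\sqrt{(5+\sqrt 5)/5},\sqrt 2]$ (case (iv) of Lemma~\ref{l:*}), I substitute the extremal root $\alpha=(b-\sqrt{2-b^2})/(2b)$ (for which $t_*=0$ and $\lambda=(1-\alpha)b^2=(b^2+b\sqrt{2-b^2})/2$) into $G$; the factor $\lambda/((1-\alpha)b^2)$ collapses to $1$, and a direct simplification shows that $G$ vanishes identically in $b$ along this curve, so by uniqueness $z_1(b)$ coincides with $(b^2+b\sqrt{2-b^2})/2$. The threshold $b=\sqrt{(5+\sqrt 5)/5}$ is precisely the value at which $z_-(b)$ of (\ref{3root}) merges with $(b^2+b\sqrt{2-b^2})/2$, equivalently where the extremum $\alpha_-$ of (\ref{alp_pm}) coincides with $(b-\sqrt{2-b^2})/(2b)$; below it, the middle root at the upper endpoint of existence is $\alpha_-$ rather than $(b-\sqrt{2-b^2})/(2b)$, and $z_1(b)$ is only implicitly defined. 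The main obstacle I expect is the uniform sign verification of $\partial_\lambda G$ across the four existence regions, since $\alpha(b,\lambda)$ is available only implicitly through the cubic (\ref{cub_eq}); the inequality must be reduced to a polynomial one in $(\alpha,b)$ and checked on the admissible set with a case split mirroring (i)--(iv).
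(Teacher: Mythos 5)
Your envelope-theorem reduction of $\partial_\lambda G$ to
\[
\partial_\lambda G \;=\; 2 - \frac{1}{1-\alpha} + \frac{1}{\lambda} - \frac{2\lambda}{b^2}
\]
is correct, and your identification of the closed-form branch of $\gamma_2$ via the extremal root (where $t_*=0$, $\lambda=(1-\alpha)b^2$, and the logarithm collapses so $G=0$) is a clean observation. However, the two load-bearing steps of the plan both fail.

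\textbf{The monotonicity claim is false.} Using $\lambda=(1-\alpha)\bigl(2\alpha(1-\alpha)b^2+1\bigr)$ one finds $\partial_\lambda G = 2 - \dfrac{2\alpha(1-\alpha)b^2}{\lambda} - \dfrac{2\lambda}{b^2}$, which has no definite sign on the admissible set. A concrete counterexample: take $b=1$, $\alpha=0.9$, so $\lambda = 2\alpha(1-\alpha)^2+1-\alpha \approx 0.118$. This is an admissible $*$-saddle ((\ref{der1}) holds with $-1.34<0$, $t_*^2\approx 0.16>0$) with $\lambda<b$, yet $\partial_\lambda G = 2 - 10 + 8.47 - 0.24 \approx 0.24 > 0$. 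So $G(b,\cdot)$ is not decreasing in $\lambda$ on the relevant region; the ``uniform sign verification'' you flag as the expected main obstacle is not merely hard, it is unattainable. This also undermines the uniqueness-of-zero argument used to define $z_1(b)$ in part (ii). The paper avoids this by slicing in the other direction: it fixes $\alpha$ and varies $s=1/b^2$, defining $H(s) = F_I-F_{II}$ along that slice. There, both $H$ and $H'$ vanish at $s=1-2\alpha(1-\alpha)$, and $H'(s)$ is the product of a nonnegative factor and a quadratic $g(s)$ with a controlled sign, which rescues monotone-type reasoning precisely because of the double zero structure that your $\lambda$-slice lacks.

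\textbf{The appeal to Lemma~\ref{l:I>III} is circular.} For the case $b\le 1$ you anchor the argument at $\lambda=b$ where $F_{II}=F_{III}$, and then invoke $F_I\ge F_{III}$ ``which is the content of Lemma~\ref{l:I>III}.'' But in the paper, Lemma~\ref{l:I>III}(i) in the regime $|z_0|^2\le b$ is \emph{proved from} Lemma~\ref{l:I>II} together with~(\ref{II,III_in}). You would need an independent argument here --- for instance the second-derivative computation $(f_0'')_t(0,0)=4(|z_0|^{-2}-1)>0$ showing the zero saddle (to which the $v$-saddle degenerates at $\lambda=b<1$) is not a local maximum --- rather than citing a lemma whose proof already depends on the one you are proving.

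Either of these gaps alone breaks the proof; together they mean the proposal, while structurally plausible, cannot be completed along the outlined route. The $\alpha$-slice of the paper (analyzing $H(s)$ with $s=1/b^2\ge 1-2\alpha(1-\alpha)$, controlling $H'$ via the quadratic $g$ and bounding $H(s_0(\alpha))$ through the function $r(b,\alpha)$) is the decomposition that actually makes the sign analysis tractable.
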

\begin{proof}
We start with (i). Since $z_-(b)\ge 1$ (see (\ref{3root})) and
\[
\frac{b^2+b\sqrt{2-b^2}}{2}\ge 1
\]
for $b\ge 1$ , the existence of $*$-saddle point in the conditions of (i) follows from Lemma \ref{l:*}.

Now, according to (\ref{alpha}), we have
\[
\dfrac{|z_0|^2}{1-\alpha}=2\alpha(1-\alpha)b^2+1.
\]
Therefore,
\begin{align*}
&F_{I}(\alpha, b,|z_0|^2)- F_{II}(b,|z_0|^2)
=-|z_0|^2(\tfrac{|z_0|^2}{b^2}+\tfrac{2\alpha}{1-\alpha})+2\alpha b^2-\alpha^2 b^2+1\\ 
&+\log(2\alpha(1-\alpha)+\tfrac{1}{b^2})=
b^2(2\alpha-5\alpha^2+4\alpha^3-4\alpha^2(1-\alpha)^4)-\dfrac{(1-\alpha)^2}{b^2}\\ &+(1-2\alpha-4\alpha(1-\alpha)^3)+\log(2\alpha(1-\alpha)+\tfrac{1}{b^2}),\notag
\end{align*}
where to obtain the last equality we substitute $|z_0|^2=2\alpha (1-\alpha)^2b^2+1-\alpha$ and open the parentheses.

Fix $\alpha$ and consider $s=1/b^2\ge 1-2\alpha(1-\alpha)$ (recall that if $*$-saddle point exists, then we have (\ref{b,alp_in})). Denote
\begin{align}\label{H}
H(s)=&s^{-1}\cdot (2\alpha-5\alpha^2+4\alpha^3-4\alpha^2(1-\alpha)^4)-(1-\alpha)^2 s\\ \notag
&+(1-2\alpha-4\alpha(1-\alpha)^3)+\log(2\alpha(1-\alpha)+s).    
\end{align}
One can easily check that
\begin{equation}\label{H=0}
H(1-2\alpha(1-\alpha))=0.
\end{equation}
Moreover, computing
\[
H'(s)=-\dfrac{2\alpha-5\alpha^2+4\alpha^3-4\alpha^2(1-\alpha)^4}{s^2}-(1-\alpha)^2+\dfrac{1}{2\alpha(1-\alpha)+s},
\]
one can also check
\[
H'(1-2\alpha(1-\alpha))=0
\]
and
\begin{equation*}
H'(s)=\dfrac{s-1+2\alpha(1-\alpha)}{s^2(s+2\alpha(1-\alpha))}\cdot g(s),  
\end{equation*}
where
\begin{equation}\label{g}
 g(s)=-(1-\alpha)^2s^2+(2\alpha-\alpha^2)s+4\alpha^2(1-\alpha)^4+2\alpha^3(1-\alpha).
\end{equation}
Since
\[
\dfrac{s-1+2\alpha(1-\alpha)}{s^2(s+2\alpha(1-\alpha))}\ge 0
\]
for $s\ge 1-2\alpha(1-\alpha)$, the sign of $H'(s)$ is determined by the sign of $g(s)$.

Recall that we also have
\[
|z_0|^2=2\alpha (1-\alpha)^2b^2+1-\alpha\le b\Longrightarrow b\in (b_-(\alpha),b_+(\alpha)),
\]
where
\begin{equation}\label{b_pm}
b_{\pm}(\alpha)=\dfrac{1\pm\sqrt{1-8\alpha (1-\alpha)^3}}{4\alpha (1-\alpha)^2}.
\end{equation}
Since it is easy to check that $b_+(\alpha)>\sqrt 2$ and $b_-(\alpha)^2\le (1-2\alpha(1-\alpha))^{-1}$ for any $\alpha\in [0,1]$, we are interested in 
$b\in [b_-(\alpha),(1-2\alpha(1-\alpha))^{-1/2}]$, i.e.
\begin{equation}\label{s_0}
s\in [1-2\alpha(1-\alpha), s_0(\alpha)],\quad s_0(\alpha)=(b_-(\alpha))^{-2}.
\end{equation}
Notice that
\begin{align*}
&g(1-2\alpha(1-\alpha))=-(1-\alpha)^2(1-2\alpha(1-\alpha))^2+(2\alpha-\alpha^2)(1-2\alpha(1-\alpha))\\
&+4\alpha^2(1-\alpha)^4+2\alpha^3(1-\alpha)=-(4\alpha^2-6\alpha+1)(1-2\alpha(1-\alpha)),
\end{align*}
and hence we get
\begin{align}\label{sign_g}
&g(1-2\alpha(1-\alpha))< 0, \quad \alpha\in [0,\tfrac{3-\sqrt5}{4}),\\ \notag
& g(1-2\alpha(1-\alpha))\ge 0, \quad \alpha\in [\tfrac{3-\sqrt5}{4},1]
\end{align}

Now we need the following simple lemma
\begin{lem}\label{l:sqrt2}
    If $|z_0|^2\le 1$, $b\le (1-2\alpha(1-\alpha))^{-1/2}$ and $|z_0|^2\le b$, then $\alpha\ge 1-\tfrac{1}{\sqrt2}$.
\end{lem}
\begin{proof}
 Indeed, according to (\ref{alpha}), $|z_0|^2\le 1$ implies
 \begin{equation}\label{al_in}
 2\alpha(1-\alpha)^2b^2+1-\alpha\le 1\Longrightarrow \alpha\ge 1-\dfrac{1}{b\sqrt2},
 \end{equation}
 which gives the statement for $b\ge 1$. If $b\le 1$, then we must have $b_-(\alpha)\le 1$ (see (\ref{b_pm})) which also implies $1-\alpha\le 1/\sqrt2$.
\end{proof}

 According to Lemma \ref{l:sqrt2}, in the conditions of Lemma \ref{l:I>II} (i) we have $\alpha>1-1/\sqrt2>\tfrac{3-\sqrt5}{4}$, and
 thus $g(1-2\alpha(1-\alpha))\ge 0$. As $g$ is a quadratic polynomial with negative top coefficient, it has exactly one root on $[1-2\alpha(1-\alpha),\infty)$, hence, the same is true for $H'(s)$. Thus $H(s)$ increases for $[1-2\alpha(1-\alpha),a)$, and then decreases. 


Therefore, if $H(s_0(\alpha))\ge 0$, then $H(s)\ge 0$ for all $s\in [1-2\alpha(1-\alpha), s_0(\alpha)]$ and $\alpha\ge 1-1/\sqrt2$.

It remains to check 
\begin{equation}\label{H(s_0)}
H(s_0(\alpha))\ge 0, \quad \alpha\ge 1-1/\sqrt2.
\end{equation}
If $s=s_0(\alpha)$ (i.e. $b=b_-(\alpha)$), then 
\[
|z_0|^2=2\alpha(1-\alpha)^2b^2+1-\alpha=b,
\]
i.e.
\begin{align*}
 H(s_0(\alpha))&=-\dfrac{2\alpha b}{1-\alpha}+2\alpha b^2-\alpha^2b^2-\log(b(1-\alpha))\\
 &=b-1-\alpha^2b^2-\alpha+2\alpha^3b^2 -\log(b(1-\alpha))=:r(b,\alpha)  
\end{align*}
where we have used 
\begin{align*}
    &2\alpha b^2-\alpha^2b^2=2\alpha(1-\alpha)b^2+\alpha^2b^2=\dfrac{b}{1-\alpha}+\alpha^2b^2\\
    &\dfrac{b}{1-\alpha}=2\alpha(1-\alpha)b^2+1.
\end{align*}
Taking the derivative with respect to $\alpha$ we get
\[
r'_\alpha(b,\alpha)=\dfrac{\alpha}{1-\alpha}(1-b^2(2-8\alpha+6\alpha^2)).
\]
Taking into account (\ref{der1}), $r'_\alpha(b,\alpha)\ge 0$, and hence
$r(b,\alpha)$ increases in $\alpha$. At $\alpha=1-1/\sqrt2$ we get
\[
r(b,1-1/\sqrt2)=b-2+\dfrac{1}{\sqrt2}-(\sqrt2-1)^3b^2/2-\log \dfrac{b}{\sqrt2},
\]
which, as one can easily check, is positive for all $b$.
Therefore, $r(b,\alpha)\ge 0$ for all $\alpha\ge 1-1/\sqrt2$,
and hence we obtain (\ref{H(s_0)}), which finishes the proof of Lemma \ref{l:I>II} (i).

Let's now prove (ii). Suppose $|z_0|>1$. Proceeding similarly to the proof of (i), we want to study when $H(s)\ge 0$ for $ 1-2\alpha(1-\alpha)\le s\le s_0(\alpha)$,  where $H$ is defined at (\ref{H}) and $s_0(\alpha)$ is defined in (\ref{s_0}). Since $g(0)\ge 0$, $1-2\alpha(1-\alpha)>0$, and according to (\ref{sign_g}), we have for all $s> 1-2\alpha(1-\alpha)$
\[
g(s)<0,\quad \alpha< \dfrac{3-\sqrt5}4,
\]
and hence
\begin{align*}
    H'(s)< 0,\quad \alpha<\dfrac{3-\sqrt5}4\Longrightarrow H(s)<0\,\,\hbox{if}\,\, \alpha<\dfrac{3-\sqrt5}4, s> 1-2\alpha(1-\alpha).
\end{align*}
Therefore, if $\alpha<\tfrac{3-\sqrt5}4$, and $*$-saddle point exists,
then (\ref{I<II}) holds.

Let $\alpha\ge \tfrac{3-\sqrt5}4$. According to (\ref{sign_g}), then
\begin{align}\label{domain}
&H(s)\ge 0,\quad s\in [1-2\alpha(1-\alpha), s_1(\alpha)],\\ \notag
&H(s)< 0,\quad s\ge s_1(\alpha),
\end{align}
where $s_1(\alpha)$ is a solution of $H(\alpha, s)=0$ bigger than
$1-2\alpha(1-\alpha)$ (which exists since $H(s)\to -\infty$, $s\to\infty$) for $\alpha<1$. Notice that 
\[
s_1(\tfrac{3-\sqrt5}4)=1-2\alpha(1-\alpha)\Big|_{\alpha=\tfrac{3-\sqrt5}4}=\dfrac{5-\sqrt5}{4}.
\]

Recall that we are interested in $ 1-2\alpha(1-\alpha)\le s\le s_0(\alpha)$. If $F(s_0(\alpha))\ge 0$ (i.e. $s_1(\alpha)\ge s_0(\alpha)$), then $H(s)\ge 0$ for all such $s$. Numerically, one can compute that this happens if $\alpha\ge \alpha_0\approx 0.22$.

If $\tfrac{3-\sqrt5}4 \le \alpha\le \alpha_0$, then
$H(s)\ge 0$ for $s\in [1-2\alpha(1-\alpha), s_1(\alpha)]$, and
$H(s)<0$ for $s\in [s_1(\alpha), s_0(\alpha)]$. 

Notice also that $\tfrac{3-\sqrt5}4 \le \alpha\le 1$, then
\begin{equation}\label{der_s_1}
   6\alpha^2-8\alpha+2\le 1-2\alpha(1-\alpha)\le s\Longrightarrow  (6\alpha^2-8\alpha+2)b^2-1<0
\end{equation}
for all $b^2=1/s$ with $s\ge 1-2\alpha(1-\alpha)$.

Now we will need
\begin{lem}\label{l:s_1}
   In the notations above we have
   \[
   s'_1(\alpha)>0
   \]
 as soon as $\tfrac{3-\sqrt5}4 < \alpha\le \alpha_0$.
\end{lem}
\begin{proof}
Taking into account the definition of $s_1(\alpha)$, we get
\begin{align*}
    H(\alpha,s_1(\alpha))=0\Longrightarrow H'_\alpha(\alpha,s_1(\alpha))+s'_1(\alpha)\cdot H'_s(\alpha,s_1(\alpha))=0.
\end{align*}
Since $H'_s(\alpha,s_1(\alpha))<0$, it is enough to check that $H'_\alpha(\alpha,s_1(\alpha))>0$. Taking the derivative of (\ref{H}) one get
\begin{align*}
\dfrac{1}{2}H'_\alpha&=\dfrac{1}{s}\cdot (1-5\alpha+6\alpha^2-4\alpha(1-\alpha)^4+8\alpha^2(1-\alpha)^3)+(1-\alpha)s\\
&-1-2(1-\alpha)^3+6\alpha(1-\alpha)^2+\dfrac{1-2\alpha}{s+2\alpha(1-\alpha)}\\
&=\dfrac{s+2\alpha(1-\alpha)-1}{s(s+2\alpha(1-\alpha))}\cdot q(\alpha,s),
\end{align*}
where
\[
q(\alpha,s)=(1-\alpha)s^2+s(-\alpha-2\alpha^3+6\alpha(1-\alpha)^2)-2\alpha(1-\alpha)(1-3\alpha)(2(1-\alpha)^2-1).
\]
If $\tfrac{3-\sqrt5}4 \le \alpha\le \alpha_0\approx 0.22$, then
\[
q(\alpha,0)<0.
\]
In addition, 
\[
q(\alpha,1-2\alpha(1-\alpha))=8\alpha^3-16\alpha^2+8\alpha-1=(2\alpha-1)(4\alpha^2-6\alpha+1)>0 
\]
for $\tfrac{3-\sqrt5}4 < \alpha\le \alpha_0\approx 0.22$. Since $q(\alpha, s)$ is a quadratic polynomial in $s$ and $1-2\alpha(1-\alpha)>0$, the consideration above gives
\[
q(\alpha, s)> 0
\]
for all $s\ge 1-2\alpha(1-\alpha)$ including $s=s_1(\alpha)$, which implies the lemma.
\end{proof}
Notice that it is easy to check that $s_0'(\alpha)>0$ for $\alpha\ge \alpha_0\approx 0.22$. This and lemma implies that (\ref{domain}) can be rewritten as 
\begin{align*}
H(\alpha, s)\ge 0
\end{align*}
if
\begin{align}\label{domain_s}
&\tfrac{1}{2}\le s\le \tfrac{5-\sqrt5}{4},\quad \alpha\in [\tfrac{1-\sqrt{2s-1}}{2};\tfrac{1+\sqrt{2s-1}}{2}]\,\,\hbox{or}\\ \notag
& \tfrac{5-\sqrt5}{4}\le s\le 1, \quad \alpha\in [\tilde\alpha_1(s),\tfrac{1+\sqrt{2s-1}}{2}]
\end{align}
and $H(\alpha, s)< 0$ or $*$-saddle point does not exist in the remaining domain. Here $\tilde\alpha_1(s)$ is an inverse function to
$s_1(\alpha)$ for $\tfrac{3-\sqrt5}4 \le \alpha<\alpha_0$ and to
$s_0(\alpha)$ for $\alpha\in [\alpha_0,1]$. In terms of $b\in (1,\sqrt2]$ the domain (\ref{domain_s}) can be rewritten as
\begin{align}\label{domain_b}
& \sqrt{\tfrac{5+\sqrt5}{5}}\le b\le \sqrt2,\quad \alpha\in [\tfrac{b-\sqrt{2-b^2}}{2b};\tfrac{b+\sqrt{2-b^2}}{2b}]\,\,\hbox{or}\\ \notag
&1< b\le \sqrt{\tfrac{5+\sqrt5}{5}}, \quad \alpha\in [\alpha_1(b),\tfrac{b+\sqrt{2-b^2}}{2b}]
\end{align}
where $\alpha_1(b)=\tilde\alpha_1(1/b^2)$. 

It remains to notice that for $\alpha\ge \tfrac{3-\sqrt5}4$, $b^2\le \tfrac{1}{1-2\alpha(1-\alpha)}$
\[
(2\alpha(1-\alpha)^2b^2+1-\alpha)'_\alpha=(6\alpha^2-8\alpha+2)b^2-1<0,
\]
and hence in terms of $b$, $|z_0|^2$ (\ref{domain_b}) takes the form
\begin{align}\label{domain_z}
&\sqrt{\tfrac{5+\sqrt5}{5}}\le b\le \sqrt2,\quad |z_0|^2\in [\tfrac{b^2-b\sqrt{2-b^2}}{2};\tfrac{b^2+b\sqrt{2-b^2}}{2}]\,\,\hbox{or}\\ \notag
&1< b\le \sqrt{\tfrac{5+\sqrt5}{5}}, \quad |z_0|^2\in [\tfrac{b^2-b\sqrt{2-b^2}}{2}; z_1(b)]
\end{align}
with $$z_1(b)=2\alpha(1-\alpha)^2b^2+1-\alpha\Big|_{\alpha=\alpha_1(b)}.$$
According to the definition of $\alpha_0$, we get $z_1(b)=b$ for $1\le b\le 1/\sqrt{s_1(\alpha_0)}$.
Numerically, $1/\sqrt{s_1(\alpha_0)}\approx 1.11$. 

The existence of $*$-saddle point in the domain (\ref{domain_z}) follows from (\ref{der_s_1}) (which gives $z_1(b)\le z_-(b)$) and Lemma \ref{l:*}.
This finishes the proof of Lemma \ref{l:I>II}.
\end{proof}
\begin{lem}\label{l:I>III}
\begin{enumerate}
\item[(i)] If $|z_0|\le 1$, then  
 \[
F_{I}(\alpha, b,|z_0|^2)\ge F_{III}(b,|z_0|^2)
\]
wherever $*$-saddle point exists, i.e.
\begin{itemize}
    \item  $b\le 1$, $|z_0|\le 1$;
    \item  $b\in (1,\sqrt2]$, $\tfrac{b^2-b\sqrt{2-b^2}}{2}\le |z_0|^2\le 1$.
\end{itemize}

\item[(ii)] if $|z_0|^2> \max(1,b)$, $b\ge 1/\sqrt{2}$, then there exists a curve $z_2(b)$ such that  
 \[
F_{I}(\alpha, b,|z_0|^2)\ge F_{III}(b,|z_0|^2), \quad \max(1,b)\le |z_0|^2 \le z_2(b).
\]
If $|z_0|^2> z_2(b)$, then $*$-saddle point does not exist or 
 \[
F_{I}(\alpha, b,|z_0|^2)< F_{III}(b,|z_0|^2).
\]
\item[(iii)] if $|z_0|> 1$, but $b<1/\sqrt2$, then $*$-saddle point does not exist. 
\end{enumerate}
Therefore, the curve $\gamma_3$ on Figure 1 is 
\[
|z_0|^2=z_2(b), \quad b: z_2(b)<b.
\]
\end{lem}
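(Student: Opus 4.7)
The strategy parallels that of Lemma~\ref{l:I>II}. Write
\[
G(\alpha, b, |z_0|^2) := F_I - F_{III} = (2\alpha - \alpha^2)\, b^2 - \frac{2\alpha|z_0|^2}{1 - \alpha} - \log\!\bigl((1-\alpha)|z_0|^2\bigr),
\]
and use (\ref{alpha}) to eliminate $|z_0|^2$, obtaining a closed expression in $\alpha$ and $s := 1/b^2$:
\[
G(\alpha, s) = \frac{\alpha(4\alpha^2 - 5\alpha + 2)}{s} - 2\alpha - 2\log(1-\alpha) - \log\!\Bigl(1 + \frac{2\alpha(1-\alpha)}{s}\Bigr).
\]
A direct computation shows that the sign of $\partial_s G$ is controlled by the linear quantity $s(3 - 4\alpha) - 2(1-\alpha)(4\alpha^2 - 5\alpha + 2)$; in particular $G$ is monotone decreasing in $s$ for $\alpha \ge 3/4$, and U-shaped in $s$ for $\alpha < 3/4$ with a unique minimum at an explicit $s^*(\alpha)$. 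Moreover $\lim_{s\to\infty} G(\alpha, s) = -2\alpha - 2\log(1-\alpha) > 0$ for every $\alpha \in (0, 1)$.

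Part~(iii) is immediate from Lemma~\ref{l:*}: for $b < 1/\sqrt{2}$, only case~(i) of that lemma can produce a $*$-saddle, and case~(i) enforces $|z_0|^2 \le 1$. Hence $|z_0| > 1$ together with $b < 1/\sqrt{2}$ rules out the $*$-saddle point.

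For part~(i) the condition $|z_0|^2 \le 1$ translates to $s \ge 2(1-\alpha)^2$, and one checks that $s^*(\alpha) \le 2(1-\alpha)^2$ precisely when $\alpha \le 1/2$. Thus the minimum of $G$ on the admissible $s$-interval is attained either at the boundary $|z_0|^2 = 1$ (when $\alpha \le 1/2$ or $\alpha \ge 3/4$) or at the interior critical point $s^*(\alpha)$ (when $\alpha \in (1/2, 3/4)$). In the first subcase, setting $\beta = 1-\alpha$, the boundary value reduces to the elementary inequality
\[
\tfrac{1}{2\beta^2} - \tfrac{2}{\beta} + \tfrac{3}{2} - \log\beta \ge 0, \qquad \beta \in (0, 1],
\]
which is immediate from the fact that the derivative $-(1-\beta)^2/\beta^3$ is nonpositive and the expression vanishes at $\beta = 1$. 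In the second subcase one verifies $G(\alpha, s^*(\alpha)) > 0$ by direct substitution. Combined with the positive limit at $s\to\infty$, this yields $G \ge 0$ throughout region~(i).

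For part~(ii), when $|z_0|^2 > \max(1, b)$ the middle root $\alpha$ can be small enough that the U-shape of $G(\alpha, \cdot)$ produces a negative minimum inside the admissible $s$-interval, so $G$ changes sign. Define $z_2(b)$ implicitly by $G(\alpha(b, |z_0|^2), 1/b^2) = 0$ on the relevant branch; by the U-shape of $G$ in $s$ together with the monotonicity of $\alpha$ in $|z_0|^2$ for fixed $b$ (which follows from (\ref{der1}) via the implicit function theorem), this defines a single-valued curve in the $(b, |z_0|^2)$-plane. One then checks that $G > 0$ for $\max(1, b) \le |z_0|^2 < z_2(b)$ and that for $|z_0|^2 > z_2(b)$ either $G < 0$ or the $*$-saddle disappears by Lemma~\ref{l:*}. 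The main obstacle is to control the interaction between the level set $\{G = 0\}$ and the existence boundaries of the $*$-saddle described in Lemma~\ref{l:*}---in particular $z_-(b)$ and $(b^2 \pm b\sqrt{2-b^2})/2$; as in the proof of Lemma~\ref{l:I>II}, pinning down the precise shape of $z_2(b)$ will require a case-by-case treatment around the transition values $\alpha = 1/2$, $\alpha = 3/4$, and the thresholds on $b$ separating the cases of Lemma~\ref{l:*}, possibly aided by numerical verification at a few pinch points.
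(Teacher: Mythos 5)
Your part~(iii) argument is correct and matches the paper's (immediate from Lemma~\ref{l:*}).

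Your part~(i) is a correct but genuinely different route. You compute $G=F_I-F_{III}$ directly as a function of $(\alpha,s)$ with $s=1/b^2$, locate the $s$-critical point $s^*(\alpha)$, and check the boundary value at $|z_0|^2=1$, which reduces (after $\beta=1-\alpha$) to the inequality $\tfrac1{2\beta^2}-\tfrac2\beta+\tfrac32-\log\beta\ge0$; that is the same expression the paper obtains as $\alpha^4/(2(1-\alpha)^2)\ge0$. The paper's own proof of~(i) is shorter and avoids this calculus: it first observes that when $|z_0|^2\le b$ (so the $v$-saddle exists) one can chain $F_I\ge F_{II}\ge F_{III}$ via (\ref{II,III_in}) and Lemma~\ref{l:I>II}, and when $|z_0|^2>b$ it notes the zero saddle-point has $(f_0'')_t(0,0)=4(1/|z_0|^2-1)>0$ for $|z_0|^2<1$, hence cannot be a local maximum, so the global maximum must sit at the $*$-saddle. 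Your route buys a self-contained verification at the cost of more computation; two small imprecisions should be noted, namely that for $\alpha\ge3/4$ the infimum of $G$ over the admissible $s$-range is the limit $-2\alpha-2\log(1-\alpha)$ as $s\to\infty$ rather than the boundary value at $|z_0|^2=1$, and that for $\alpha\in(1/2,3/4)$ the positivity of $G(\alpha,s^*(\alpha))$ is asserted without being shown.

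Part~(ii) has a genuine gap, and is partly on the wrong track. To produce a single-valued curve $z_2(b)$ you must control how $G$ changes when $b$ (equivalently $s$) is held fixed and $|z_0|^2$ varies, i.e.\ when $\alpha$ varies along the fixed-$s$ slice. The $U$-shape of $G$ in $s$ at fixed $\alpha$ that dominates your part~(i) analysis is orthogonal to this: it describes vertical slices in $(\alpha,s)$, not the horizontal ones needed for $z_2(b)$. The paper establishes the monotonicity you need by computing and factorizing the $\alpha$-derivative,
\begin{equation*}
\tfrac12 W'_\alpha(\alpha,b)=\dfrac{\alpha}{(1-\alpha)(1+2\alpha(1-\alpha)b^2)}\,\bigl(2(1-\alpha)(3\alpha-1)b^2+1\bigr)\bigl((1-\alpha)(2\alpha-1)b^2+1\bigr),
\end{equation*}
and showing each factor is nonnegative on $\alpha\in[\alpha_-(b),1-\tfrac1{b\sqrt2}]$ (the first via (\ref{der}), the second via an explicit lower bound $\tfrac{6-b^2-\sqrt{b^4+3b^2/2}}9>0$ for $b<\sqrt2$). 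Combined with the endpoint check $W(1-\tfrac1{b\sqrt2},b)=\alpha^4/(2(1-\alpha)^2)\ge0$ and the fact that $|z_0|^2=2\alpha(1-\alpha)^2b^2+1-\alpha$ is strictly decreasing in $\alpha$ by (\ref{der1}), this delivers a unique crossing $\alpha_1(b)$ and hence $z_2(b)$. Your proposal defers exactly this step (``the main obstacle is to control the interaction \dots possibly aided by numerical verification''); as written it does not establish the lemma.
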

\begin{proof} We start with (i). Notice that if $|z_0|^2\le b$ and so $v$-point exists, then the statement follows from (\ref{II,III_in}) and Lemma \ref{l:I>II}. It is easy to see also that the second derivative with respect to $t$ of $f_0(tI,v)$ at zero saddle-point has the form
\[
(f_0'')_t(0,0)=4\Big(\dfrac{1}{|z_0|^2}-1\Big),
\]
and hence it is positive for $|z_0|^2<1$. Therefore, this stationary point cannot be the point of local maximum.
Hence, if $|z_0|^2\le b$ and so $v$-point does not exist, the global maximum can be achieved only at $*$-saddle point (see Proposition \ref{p:main_saddle}) which implies (i). (iii) follows from Lemma \ref{l:*}.

It remains to prove (ii). Assume $|z_0|^2> 1$, $b\ge 1/\sqrt{2}$. Since we are interested in the case $|z_0|^2>b$ and $*$-point exists, according to Lemma
\ref{l:*}, we need to consider $b\in [\tfrac{1}{\sqrt2},\sqrt{\tfrac{5+\sqrt5}{5}}]$
\[
\max(1,b)\le |z_0|^2\le z_-(b)
\]
with $z_-$ of (\ref{3root}).
Since
\begin{equation}\label{b_bound}
 z_-(b)>b\Longrightarrow b\le b_0\approx 1.128,
\end{equation}
we are interested in $b\in [\tfrac{1}{\sqrt2},b_0]$.

In addition, we get
\[
\alpha\ge \alpha_-(b)
\]
with $\alpha_-$ of (\ref{alp_pm}) and, since $|z_0|> 1$ (see (\ref{al_in})),
\[
\alpha< 1-\dfrac{1}{b\sqrt2}.
\]
Define
\begin{align}\label{h}
    &W(\alpha,b)=F_I(\alpha,b,|z_0|^2)-F_{III}(\alpha,b,|z_0|^2)\\
    \notag &=-\alpha^2b^2-\dfrac{2\alpha |z_0|^2}{1-\alpha}+2\alpha b^2-\log(|z_0|^2(1-\alpha))\\ \notag
    &=b^2(2\alpha-5\alpha^2+4\alpha^3)-2\alpha-2\log(1-\alpha)-\log(1+2\alpha(1-\alpha)b^2).
\end{align}
Here we used 
\[
\dfrac{|z_0|^2}{1-\alpha}=1+2\alpha(1-\alpha)b^2.
\]
Consider now
\begin{align*}
    &\tfrac12 W'(\alpha,b)=b^2(1-5\alpha+6\alpha^2)-1+\dfrac{1}{1-\alpha}-\dfrac{1-2\alpha}{1+2\alpha(1-\alpha)b^2}\\
    &=\dfrac{\alpha}{(1-\alpha)(1+2\alpha(1-\alpha)b^2)}(1+b^2(-3+11\alpha-8\alpha^2)+2b^4(1-\alpha)^2(1-5\alpha+6\alpha^2))\\
    &=\dfrac{\alpha}{(1-\alpha)(1+2\alpha(1-\alpha)b^2)}\big(2(1-\alpha)(3\alpha-1)b^2+1\big)\big((1-\alpha)(2\alpha-1)b^2+1\big).
\end{align*}
According to (\ref{der}),
\[
2(1-\alpha)(3\alpha-1)b^2+1=1-b^2(6\alpha^2-8\alpha+2)\ge 0.
\]
Since $\alpha<1-\tfrac{1}{b\sqrt2}\le 1/2$, we have for $\alpha\in [\alpha_-(b),1-\tfrac{1}{b\sqrt2}]$
\begin{align*}
  (1-\alpha)(2\alpha-1)b^2+1\ge 1-b^2(1-\alpha_-(b))(1-2\alpha_-(b))=\dfrac{6-b^2-\sqrt{b^4+3b^2/2}}{9}>0  
\end{align*}
for $b<\sqrt2$.
Therefore, we obtain
\[
W'(\alpha,b)\ge 0, \quad \alpha\in [\alpha_-(b),1-\tfrac{1}{b\sqrt2}]
\]
and hence $W(\alpha,b)$ increases on $\alpha\in [\alpha_-(b),1-\tfrac{1}{b\sqrt2}]$.

In addition, if $\alpha=1-\tfrac{1}{b\sqrt2}$, then
\[
|z_0|^2=1, \quad b^2=\tfrac{1}{2(1-\alpha)^2},
\]
and so
\begin{align*}
W(1-\tfrac{1}{b\sqrt2},b)&=b^2(2\alpha-5\alpha^2+4\alpha^3)-2\alpha-\log(1-\alpha)\\
&=\dfrac{2\alpha-5\alpha^2+4\alpha^3}{2(1-\alpha)^2}-2\alpha-\log(1-\alpha)\\
&\ge \dfrac{2\alpha-5\alpha^2+4\alpha^3}{2(1-\alpha)^2}-2\alpha+\alpha+\dfrac{\alpha^2}{2}=\dfrac{\alpha^4}{2(1-\alpha)^2}\ge 0.
\end{align*}
Therefore, 
\begin{align}\label{I>III}
    W(\alpha,b)\ge 0, \quad \alpha\in [\alpha_1(b),1-\tfrac{1}{b\sqrt2}]
\end{align}
where $\alpha_1(b)=\alpha_-(b)$ if $W(\alpha_-(b),b)\ge 0$ and
$\alpha_1(b)\in [\alpha_-(b),1-\tfrac{1}{b\sqrt2}]$ such that
\[
W(\alpha_1(b),b)=0
\]
if $W(\alpha_-(b),b)<0$. 

According to (\ref{der}), it gives that (\ref{I>III}) holds
for any $b\in [1/\sqrt2, \sqrt2]$, $\max(1,b)\le |z_0|^2\le z_2(b)$ where 
\[
z_2(b)=2\alpha_1(b)(1-\alpha_1(b))^2b^2+1-\alpha_1(b),
\]
and the opposite inequality holds if $|z_0|^2> z_2(b)$, as desired.

Notice that one can compute numerically that starting from $b>b_1\approx 1.11$ we get $z_2(b)<b$, and so the interval 
$\max(1,b)\le |z_0|^2\le z_2(b)$ is empty.
\end{proof}

\subsection{Integral estimates}
Now we proceed to the integral estimates. Consider first the domain $\Omega_1$ where the $*$-saddle point dominates (see Lemmas \ref{l:I>II}-\ref{l:I>III}). In a standard way the integration domain in \eqref{IR SVD} can be restricted as follows
\begin{equation*}
	\CF_2(z_1,z_2) = \frac{2n^5}{\pi} \int\limits_{\Sigma_r} \Vanddet^2(T^2) t_1t_2 \times e^{nf(T, U, V, v)} d\mu(U) d\mu(V) dTd\bar vdv + O(e^{-nr/2}),
\end{equation*}
where
\begin{equation*}
	\Sigma_r = \left\{(T, U, V, v) \in \idom \mid \norm{T} + \abs{v} \le r\right\}.
\end{equation*}
The next step is to restrict the integration domain by $\frac{\log n}{\sqrt{n}}$-neighborhood of the *-saddle point.
To this end we need to expand $f$ near the $*$-saddle point $(t_*I,x_*)$:
\begin{lem}\label{lem:f(UT V^*) expansion}
	 Let $\tilde{T}$ be a $2 \times 2$ diagonal matrix such that $\normsized{\tilde{T}} \le \log n$ and $\tilde{v} = \tilde{x} + i\tilde{y}$ be a complex number with $\abs{\tilde{v}} \le \log n$. Then uniformly in $U$ and $V$
	\begin{equation} \label{f expansion}
	\begin{split}
	f(t_*I + \tfrac{1}{\sqrt{n}}\tilde{T}, U, V, x_* + \tfrac{1}{\sqrt{n}}\tilde{v}) = f_{0*} + n^{-1/2} \frac{\abs{z_0}^2 + t_*^2}{h_*} \tr (\cConjScl{z}_0\cMatPos + z_0\cMatPos^*) \\
    + n^{-1} \qform^{(I)}(\tilde{T}, U, V, \tilde{v}) + O\big(n^{-3/2}\log^3 n\big)
	\end{split}
	\end{equation}
	where
	\begin{align}
    \begin{split}\label{qform def}
    \qform^{(I)}(\tilde{T}, U, V, \tilde{v}) &= \frac{1}{2h_*}\tr \left[ -
    4(t_*^2 + bx_*)\tilde{T}^2 - 4t_*\tilde{T}P_1 - P_1^2 + 2(\abs{z_0}^2 + t_*^2)\cMatPos_U\cMatPos_V^* \right] \\
    &\quad{}+ \frac{1}{2h_*}\left[(2t_*\tr \tilde{T} + \tr P_1 )^2 + 4bt_*\tilde{x} \tr \tilde T + 2bx_*(\tr \tilde T)^2 + 2b^2\abs{\tilde{v}}^2\right] \\
    &\quad{}- \frac{1}{2h_*^2}\left[ 2h_*(t_*\tr \tilde{T} + x_*\tilde{x} ) + (\abs{z_0}^2 + t_*^2)\tr P_1 \right]^2 - \abs{\tilde{v}}^2;
    \end{split}\\
    \label{zeta_B def}
    \cMatPos&=\begin{pmatrix}
        \zeta_1&0\\
        0&\zeta_2
    \end{pmatrix},\quad \cMatPos_B = B^*\cMatPos B; \\
    \notag
    P_1 &= \cConjScl{z}_0\cMatPos_U + z_0\cMatPos_V^*,
	\end{align}
 and $f_{0*}=f_0(t_*I, x_*)$.
\end{lem}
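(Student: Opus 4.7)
The plan is a direct Taylor expansion of $f$ in powers of $n^{-1/2}$ around $(t_*I, x_*)$. Since $f(T, U, V, v) = \hat f(UTV^*, v\det UV^*)$ and $\tr(UTV^*)^*(UTV^*) = \tr T^2$, $|v\det UV^*|^2 = |v|^2$, one has
\begin{equation*}
f(T, U, V, v) = -\tr T^2 - |v|^2 + \log\bigl[\det A(UTV^*) + 2b(\Re v)\det T + b^2|v|^2\bigr],
\end{equation*}
so $U$ and $V$ enter only through $\det A$. Writing $T = t_*I + n^{-1/2}\tilde T$, $v = x_* + n^{-1/2}\tilde v$, $Z = z_0 I + n^{-1/2}\cMatPos$ and $W = UV^*$, split $A = A_{00} + n^{-1/2}A_1$ with
\begin{equation*}
A_{00} = \begin{pmatrix}-z_0 I & t_*W\\ -t_*W^* & -\bar z_0 I\end{pmatrix}, \qquad A_1 = \begin{pmatrix}-\cMatPos & U\tilde T V^*\\ -V\tilde T U^* & -\cMatPos^*\end{pmatrix},
\end{equation*}
and set $X_1 = A_{00}^{-1}A_1$. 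A block inversion exploiting $W^*W = I$ yields $\det A_{00} = (|z_0|^2 + t_*^2)^2$ and an explicit $A_{00}^{-1}$, after which
\begin{equation*}
\det A = (|z_0|^2+t_*^2)^2\bigl[1 + n^{-1/2}\tr X_1 + \tfrac12 n^{-1}\bigl((\tr X_1)^2 - \tr X_1^2\bigr)\bigr] + O(n^{-3/2}\log^3 n).
\end{equation*}

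Combining this with the polynomial expansions of $\det T$, $|v|^2$ and $\tr T^2$, I would write the argument of $\log$ as $h_* + n^{-1/2}h_1 + n^{-1}h_2 + O(n^{-3/2}\log^3 n)$, with $h_* = (|z_0|^2 + t_*^2)^2 + 2bx_*t_*^2 + b^2 x_*^2$, and apply $\log(1+y) = y - y^2/2 + O(y^3)$. A short block product gives $(|z_0|^2 + t_*^2)\tr X_1 = 2t_*\tr\tilde T + \tr P_1$, and the identities $h_* = |z_0|^2 + t_*^2 + bx_*$ and $b(t_*^2 + bx_*) = x_* h_*$ recorded in Lemma \ref{l:*} (precisely the saddle equations) force the cancellation of every $\tilde T$-linear and $\tilde v$-linear term against the expansion of $-\tr T^2 - |v|^2$. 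What remains at order $n^{-1/2}$ is the $\cMatPos$-linear piece $(|z_0|^2 + t_*^2)h_*^{-1}\tr(\bar z_0\cMatPos + z_0\cMatPos^*)$, matching the announced formula.

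The $n^{-1}$ coefficient is assembled from three sources: the quadratic remainders of $-\tr T^2$ and $-|v|^2$ (yielding in particular $-\tr\tilde T^2 - |\tilde v|^2$); the term $h_2/h_*$, which combines the second-order piece $\tfrac12((\tr X_1)^2 - \tr X_1^2)(|z_0|^2+t_*^2)^2$ of $\det A$ with the quadratic parts of $2b(\Re v)\det T + b^2|v|^2$; and the subtraction $-h_1^2/(2h_*^2)$, which produces the outer squared bracket in $\qform^{(I)}$. The main obstacle is the bookkeeping of $\tr X_1^2$: one expands $X_1^2 = A_{00}^{-1}A_1 A_{00}^{-1}A_1$ blockwise, uses $WW^* = I$ repeatedly, and checks that all $W$-dependence collapses into the invariant combinations $\cMatPos_U = U^*\cMatPos U$, $\cMatPos_V = V^*\cMatPos V$ and $P_1 = \bar z_0\cMatPos_U + z_0\cMatPos_V^*$; the cross terms between the off-diagonal blocks of $X_1$ are what produce the $2(|z_0|^2+t_*^2)\tr(\cMatPos_U\cMatPos_V^*)$ entry in $\qform^{(I)}$. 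Uniformity in $U, V$ is immediate because $U(2)$ is compact and all quantities depend continuously on them; the $O(n^{-3/2}\log^3 n)$ error is produced by the cubic remainders in $\log\det(I + \cdot)$ and $\log(1 + \cdot)$ applied to arguments of size $O(n^{-1/2}\log n)$.
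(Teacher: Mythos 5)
Your proposal is correct and follows essentially the same route as the paper: expand $\log\det A$ via $\log\det(I + n^{-1/2}A_0^{-1}A_1)$, assemble the argument of the logarithm as $h_* + n^{-1/2}h_1 + n^{-1}h_2 + O(n^{-3/2}\log^3 n)$, invoke the saddle equations $h_* = |z_0|^2 + t_*^2 + bx_*$ and $b(t_*^2+bx_*)=x_*h_*$ to cancel the $\tilde T$- and $\tilde v$-linear terms against $-\tr T^2 - |v|^2$, and read off $\qform^{(I)}$ from the $n^{-1}$-order pieces together with $-h_1^2/(2h_*^2)$. The only organizational difference is that the paper first conjugates $A$ by $\diag(U,V)$ so that the leading block $A_0$ is a fixed matrix with identity $2\times 2$ blocks and $A_0^{-1}$ is explicit and $U,V$-free, with the unitaries entering only through $\cMatPos_U$, $\cMatPos_V^*$ in $A_1$, whereas you keep $W=UV^*$ inside $A_{00}$ and then rely on $W^*W=WW^*=I$ and cyclic trace identities to collapse the $W$-dependence into the same invariant combinations --- both work, but the paper's conjugation sidesteps the bookkeeping you flag as the main obstacle.
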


\begin{proof}
If $Q = U(t_*I + n^{-1/2}\tilde{T})V^*$ then the matrix $A$ \eqref{A def} has the form
	\begin{equation*}
		A = \begin{pmatrix}
			U 	& 0 \\
			0 	& V
		\end{pmatrix}\left(A_0 + \frac{1}{\sqrt{n}}A_1\right)\begin{pmatrix}
		U^*	& 0 \\
		0	& V^*
	\end{pmatrix},
\end{equation*}
where
\begin{equation*}\label{A_0,A_1 def}
A_0 = \begin{pmatrix}
-z_0I 		& t_*I \\
-t_*I 	& -\bar z_0I
\end{pmatrix}, \quad
A_1 = \begin{pmatrix}
-\cMatPos_U 		& \tilde{T} \\
-\tilde{T} 	& -\cMatPos_V^*
\end{pmatrix}.
\end{equation*}
One gets
\begin{equation}\label{log det A}
\begin{split}
\log \det A &= \log \det A_0 + \log \det A_0^{-1}A = \log \det A_0 + \tr \log (1 + n^{-1/2}A_0^{-1}A_1) \\
&= \log \det A_0 + \frac{1}{\sqrt{n}}\tr A_0^{-1}A_1 - \frac{1}{2n}\tr (A_0^{-1}A_1)^2 + O\left(\frac{\log^3 n}{\sqrt{n^3}}\right)
\end{split}
\end{equation}
uniformly in $U$ and $V$. Moreover,
\begin{align*}
A_0^{-1} &= \frac{1}{\abs{z_0}^2 + t_*^2}\begin{pmatrix}
-\cConjScl{z}_0I & -t_*I \\
t_*I		     & -z_0I
\end{pmatrix},\\
A_0^{-1}A_1 &= \frac{1}{\abs{z_0}^2 + t_*^2}\begin{pmatrix}
\cConjScl{z}_0\cMatPos_U + t_*\tilde{T}	& -\cConjScl{z}_0\tilde{T} + t_*\cMatPos_V^* \\
-t_*\cMatPos_U + z_0\tilde{T}		& t_*\tilde{T} + z_0\cMatPos_V^*
\end{pmatrix},\\
\tr A_0^{-1}A_1 &=\frac{1}{\abs{z_0}^2 + t_*^2}\tr\left[ 2t_* \tilde{T} + P_1 \right],\\
\tr (A_0^{-1}A_1)^2 &=\frac{1}{\left(\abs{z_0}^2 + t_*^2\right)^2}\tr\Bigl[ 2(t_*^2 - \abs{z_0}^2) \tilde{T}^2 + 4t_*P_1\tilde{T}\\ &\quad + \cConjScl{z}_0^2\cMatPos_U^2 + z_0^2(\cMatPos_V^*)^2 
- 2t_*^2\cMatPos_U \cMatPos_V^* \Bigr]
\end{align*}
where $P_1 = \cConjScl{z}_0\cMatPos_U + z_0\cMatPos_V^*$. \eqref{log det A} implies
\begin{multline}\label{det A expansion}
    \det A = (\abs{z_0}^2 + t_*^2)^2\left( 1 + \frac{1}{\sqrt{n}} \tr A_0^{-1}A_1 \right.\\
    \left.+ \frac{1}{2n}\big(\tr A_0^{-1}A_1)^2 - \tr (A_0^{-1}A_1)^2\big) \right)
    + O\left(\frac{\log^3 n}{\sqrt{n^3}}\right).
\end{multline}
Further,
\begin{equation}\label{x t1 t2}
\begin{split}
    xt_1t_2 &= x_*t_*^2 + \frac{1}{\sqrt{n}}(t_*^2\tilde{x} + x_*t_*\tr \tilde{T}) \\
    &\quad{}+ \frac{1}{2n}\left(2t_*\tilde{x}\tr \tilde{T} + x_*(\tr \tilde{T})^2 - x_*\tr \tilde{T}^2\right) + O\left(\frac{\log^3 n}{\sqrt{n^3}}\right).
\end{split}
\end{equation}
Equations \eqref{det A expansion} and \eqref{x t1 t2} yield for $Q = U(t_*I + n^{-1/2}\tilde{T})V^*$, $v=x_*+n^{-1/2}\tilde v$ 
\begin{equation}\label{h expan}
    \begin{split}
    &h(Q,v)= h_* + \frac{1}{\sqrt{n}}\left[ (\abs{z_0}^2 + t_*^2)a_1 + 2bt_*^2\tilde{x} + 2bx_*t_*\tr \tilde{T} + 2b^2x_*\tilde{x} + 2b^2y_0\tilde{y} \right] \\
    &+ \frac{1}{2n}\left[ a_1^2 - a_2 + 4bt_*\tilde{x}\tr \tilde{T} + 2bx_*(\tr \tilde{T})^2 - 2bx_*\tr \tilde{T}^2 + 2b\tilde{x}^2 + 2b\tilde{y}^2 \right] + O\left(\frac{\log^3 n}{\sqrt{n^3}}\right),
    \end{split}
\end{equation}
where $h$ is defined in \eqref{h_v}, and
\begin{align}
    \notag
    a_1 &= (\abs{z_0}^2 + t_*^2)\tr A_0^{-1}A_1 = \tr\left[ 2t_* \tilde{T} + P_1 \right],\\
    \label{a_2 def}
    \begin{split}
    a_2 &= (\abs{z_0}^2 + t_*^2)^2\tr (A_0^{-1}A_1)^2 \\
    &= \tr\Bigl[ 2(t_*^2 - \abs{z_0}^2) \tilde{T}^2 + 4t_*P_1\tilde{T} + \cConjScl{z}_0^2\cMatPos_U^2 + z_0^2(\cMatPos_V^*)^2 - 2t_*^2\cMatPos_U \cMatPos_V^* \Bigr].
    \end{split}
\end{align}
Using the equations \eqref{system}, the equation \eqref{h expan} can be transformed to
\begin{align}\label{h expan1}
    &h(Q,v) = h_* + \frac{1}{\sqrt{n}}\left[ 2h_*t_*\tr \tilde{T} + 2h_*x_*\tilde{x}  + (\abs{z_0}^2 + t_*^2)\tr P_1 \right] \\ \notag
    &+ \frac{1}{2n}\left[ a_1^2 - a_2 + 4bt_*\tilde{x}\tr \tilde{T} + 2bx_*(\tr \tilde{T})^2 - 2bx_*\tr \tilde{T}^2 + 2b\tilde{x}^2 + 2b\tilde{y}^2 \right] + O\left(\frac{\log^3 n}{\sqrt{n^3}}\right).
\end{align}
Substituting \eqref{h expan1} into \eqref{f def} and expanding the logarithm, we obtain
\begin{align*}
&f\left(t_*I + \tfrac{1}{\sqrt{n}}\tilde{T}, U, V, x_* + \tfrac{1}{\sqrt{n}}\tilde{v}\right) = f_{0*} + n^{-1/2} \frac{\abs{z_0}^2 + t_*^2}{h_*} \tr (\cConjScl{z}_0\cMatPos + z_0\cMatPos^*)  \\   &+ n^{-1} \biggl\lbrace - \tr \tilde{T}^2 
    - \tilde{x}^2 - \tilde{y}^2- \frac{1}{2h_*^2}\left[ 2h_*t_*\tr \tilde{T} + 2h_*x_*\tilde{x} + (\abs{z_0}^2 + t_*^2)\tr P_1 \right]^2\\
    &+ \frac{1}{2h_*}\left[ a_1^2 - a_2 + 4bt_*\tilde{x}\tr \tilde{T} + 2bx_*(\tr \tilde{T})^2 - 2bx_*\tr \tilde{T}^2 + 2b\tilde{x}^2 + 2b\tilde{y}^2 \right] \biggr\rbrace + O\Big(\tfrac{\log^3 n}{n^{3/2}}\Big).
\end{align*}
The last expansion, \eqref{a_2 def}, and \eqref{h_*} imply \eqref{f expansion}.
\end{proof}
We also need
\begin{lem}\label{lem:est for Re f}
	Let the $*$-saddle-point $(t_*, t_*, x_*, 0)$ defined by Proposition \ref{p:main_saddle} be a unique global maximum point of the function $f_0(T,v)=f_0(t_1, t_2, x, y)$. Let 
\begin{equation*}
    \tilde{f}(T, U, V, v) = f(T, U, V, v) - f_*,
\end{equation*}
  where  $f_* = f(t_*I, I, I, x_*)$ with $f$ of (\ref{f def}).
    Then for sufficiently large~$n$
	\begin{equation*} 
		\max_{\frac{\log n}{\sqrt{n}} \le \norm{T - t_*I} + \abs{v - x_*} \le r} \tilde{f}(T, U, V, v) \le -C\frac{\log^2 n}{n}
	\end{equation*}
	uniformly in $U$ and $V$.
\end{lem}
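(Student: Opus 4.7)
The plan is to split the region $\Sigma := \{(T, U, V, v) \in \idom : \tfrac{\log n}{\sqrt n} \le \|T - t_*I\| + |v - x_*| \le r\}$ at some fixed threshold $\delta > 0$ into a ``far'' piece $\Sigma_f = \{\eta \ge \delta\}$ and a ``near'' piece $\Sigma_n = \{\tfrac{\log n}{\sqrt n} \le \eta \le \delta\}$, writing $\eta := \|T - t_*I\| + |v - x_*|$.

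On $\Sigma_f$ I would use that, by hypothesis, $(t_*I, x_*)$ is the \emph{unique} global maximum of the continuous function $f_0$ on the compact set $\{\|T\| + |v| \le r\}$. Standard continuity then gives $c_0(\delta) > 0$ with $f_0(T, v) - f_0(t_*I, x_*) \le -c_0(\delta)$ on $\Sigma_f$. Combined with the uniform bound $|f - f_0| = n^{-1/2}|f_r| \le C_1 n^{-1/2}$ (since $f_r$ is continuous on the compact set), this immediately gives $\tilde f \le -c_0(\delta) + 2C_1 n^{-1/2} \le -c_0(\delta)/2$ for large $n$, which is much stronger than $-C\log^2 n/n$.

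On $\Sigma_n$ I would Taylor-expand $f(T, U, V, v)$ about $(t_*I, x_*)$ in the $(T, v)$ variables, treating $U, V$ as parameters. Because $\nabla_{T, v} f_0(t_*I, x_*) = 0$ (saddle-point condition for $f_0$) and $f - f_0$ is $O(n^{-1/2})$ uniformly together with its derivatives, the gradient $\nabla_{T, v} f(t_*I, U, V, x_*) = O(n^{-1/2})$ uniformly in $U, V$, and the Hessian $\mathrm{Hess}_{T, v} f(t_*I, U, V, x_*) = \mathrm{Hess}_{T, v} f_0(t_*I, x_*) + O(n^{-1/2})$, with the leading part negative definite (as can be read off from $\qform^{(I)}$ at $\cMatPos = 0$, $U = V = I$, which reduces to the Hessian of $f_0$). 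This yields, for $n$ large and $\delta$ small,
\begin{equation*}
    f(T, U, V, v) - f(t_*I, U, V, x_*) \le C\eta/\sqrt n - c\eta^2 + O(\eta^3)
\end{equation*}
uniformly in $U, V$. The remaining piece $f(t_*I, U, V, x_*) - f_*$ is controlled by Lemma~\ref{lem:f(UT V^*) expansion} at $(\tilde T, \tilde v) = (0, 0)$: the $n^{-1/2}$ coefficient $C_0 := h_*^{-1}(|z_0|^2 + t_*^2)\tr(\bar z_0 \cMatPos + z_0 \cMatPos^*)$ is \emph{independent} of $U, V$, hence $f(t_*I, U, V, x_*) - f_* = n^{-1}[\qform^{(I)}(0, U, V, 0) - \qform^{(I)}(0, I, I, 0)] + O(n^{-3/2}\log^3 n) = O(n^{-1})$. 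Summing, $\tilde f \le C\eta/\sqrt n - c\eta^2 + O(\eta^3) + O(n^{-1})$; for $\eta \ge \log n/\sqrt n$ and $n$ large the quadratic term dominates the linear one, and the cubic and $n^{-1}$ terms are of lower order, delivering $\tilde f \le -c'\log^2 n /n$.

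The main obstacle is precisely the cancellation of the $O(n^{-1/2})$ $(U, V)$-dependence at the saddle. A priori $f(t_*I, U, V, x_*) - f_*$ could be $O(n^{-1/2})$ and positive for some $U, V$; this would dominate the quadratic decrease $-c\eta^2 = O(\log^2 n/n)$ and invalidate the estimate. The rescue is that at $T = t_*I$ the conjugates $U^* Z U$, $V^* Z^* V$ enter $\det A(t_*UV^*)$ only through the trace $\tr \cMatPos$ (by the cyclic property and $T \propto I$), making the leading $n^{-1/2}$ correction $C_0$ a $U,V$-invariant constant; this is exactly the structural content of Lemma~\ref{lem:f(UT V^*) expansion} at $(\tilde T, \tilde v) = (0, 0)$.
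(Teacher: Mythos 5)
Your proposal is correct and follows essentially the same route as the paper: a near/far split at a fixed $\delta$, a compactness gap for $f_0$ plus $\abs{f-f_0}=O(n^{-1/2})$ on the far piece, a negative-definite Hessian of $f_0$ at the saddle on the near piece, and the crucial cancellation that the $n^{-1/2}$ term in Lemma~\ref{lem:f(UT V^*) expansion} is $U,V$-independent so that $f(t_*I,U,V,x_*)-f_*=O(n^{-1})$. The only cosmetic difference is that the paper packages the near-piece argument as monotonicity of a directional derivative $\der{}{\sigma}\tilde f$ along rays, whereas you write the equivalent quadratic Taylor estimate directly.
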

\begin{proof}
	First let us check that the first 
    derivatives of $f_r$ are bounded in the $\delta$-neighborhood of the manifold $(t_*I, U, V, x_*)$, $U, V \in U(2)$, where $f_r$ is defined in \eqref{f_r def} and $\delta$ is $n$-independent. Indeed, since $h$ and $h_0$ are polynomials 
	\begin{align*}
		\abs{\frac{1}{\sqrt{n}}\frac{\partial f_r}{\partial s}} = \abs{\frac{\partial (f - f_0)}{\partial s}} = \abs{\frac{\partial (\log h - \log h_0)}{\partial s}} 		\le \abs{\frac{1}{h_0} \cdot \frac{\partial h_0}{\partial s} - \frac{1}{h} \cdot \frac{\partial h}{\partial s}} \le \frac{C}{\sqrt{n}},
	\end{align*}
	where $s$ is either $t_1$, $t_2$ $x$ or $y$. Let $T_E$ be a real diagonal $2 \times 2$ matrix of unit norm and let $v_E \in \Compl$ be a number on the unit circle. Then for any $T_E$ and $v_E$ and for $\frac{\log n}{\sqrt{n}} \le \sigma \le \delta$ we have
	\begin{equation*}
		\begin{split}
			\der{}{\sigma} \tilde{f}(t_*I + \sigma T_E, U, V, x_* + \sigma v_E) &= \langle \nabla_{T, x, y} f_0(t_*I + \sigma T_E, x_* + \sigma v_E), \EcVec_E \rangle \\
			&\quad{}+ n^{-1/2} \langle \nabla_{T, x, y} f_r(t_*I + \sigma T_E, U, V, x_* + \sigma v_E), \EcVec_E \rangle \\
			&= \langle \nabla_{T, x, y} f_0(t_*I + \sigma T_E, x_* + \sigma v_E), \EcVec_E \rangle + O(n^{-1/2}),
		\end{split}
	\end{equation*}
	where $\EcVec_E$ denotes a vector $(t_{E1}, t_{E2}, x_E, y_E)$ and $\langle \cdot, \cdot \rangle$ is a standard real scalar product. Expanding the scalar product by Taylor formula and considering that $\nabla_{T, x, y} f_0(t_*I, x_*) = 0$, we obtain
	\begin{equation*}
		\begin{split}
			\der{}{\sigma} \tilde{f}(t_*I + \sigma T_E, U, V, x_* + \sigma v_E) &= \sigma\langle f_0''(t_*I, x_*)\EcVec_E, \EcVec_E \rangle + r_1 + O(n^{-1/2}),
		\end{split}
	\end{equation*}
	where $f_0''$ is a matrix of second order derivatives of $f_0$ w.r.t.\ $T$, $x$ and $y$ and $\abs{r_1} \le C\sigma^2$. $f_0''(t_*I, x_*)$ is non-negative definite, because $(t_*I, x_*)$ is a global maximum. Putting $\cMatPos = 0$ into \eqref{f expansion} and \eqref{qform def} one obtains
    \begin{equation*}
    f_0''(t_*I, x_*) = -2\begin{pmatrix}
    2t_*^2 + \frac{bx_*}{h_*} & 2t_*^2 - \frac{2t_*^2 + bx_*}{h_*} & t_*\left(2x_* - \frac{b}{h_*}\right) & 0 \\
    2t_*^2 - \frac{2t_*^2 + bx_*}{h_*} & 2t_*^2 + \frac{bx_*}{h_*} & t_*\left(2x_* - \frac{b}{h_*}\right) & 0 \\
    t_*\left(2x_* - \frac{b}{h_*}\right) & t_*\left(2x_* - \frac{b}{h_*}\right) & 1 + 2x_*^2 - \frac{b^2}{h_*} & 0 \\
    0 & 0 & 0 & 1 - \frac{b^2}{h_*}
    \end{pmatrix}.
    \end{equation*} 
    A straightforward check shows that $\det f_0''(t_*I, x_*) > 0$. Hence $f_0''(t_*I, x_*)$ is negative definite and $\der{}{\sigma} (t_*I + \sigma T_E, U, V, x_* + \sigma v_E)$ is negative and
	\begin{equation}\label{Re f: nbh est}
	\begin{split}
	\max_{\frac{\log n}{\sqrt{n}} \le \norm{T - t_*I} + \abs{v - x_*} \le \delta} \tilde{f}(T, U, V, v) &= \max_{\norm{T - t_*I} + \abs{v - x_*} = \frac{\log n}{\sqrt{n}}} \tilde{f}(T, U, V, v) \\
	&\le f(t_*I, U, V, x_*) - C\frac{\log^2 n}{n} - f_*.
	\end{split}
	\end{equation}
	Notice that $f_r$ is bounded from above uniformly in $n$ and $(t_*, t_*, x_*, 0)$ is a point of  global maximum of the function $f_0$. These facts imply that $\delta$ in \eqref{Re f: nbh est} can be replaced by $r$
	\begin{equation*}
		\max_{\frac{\log n}{\sqrt{n}} \le \norm{T - t_*I} + \abs{v - x_*} \le r} \tilde{f}(T, U, V, v) \le f(t_*I, U, V, x_*) - f_* - C\frac{\log^2 n}{n}.
	\end{equation*}
	It remains to deduce from Lemma \ref{lem:f(UT V^*) expansion} that $f(t_*I, U, V, x_*) - f_* = O(n^{-1})$ uniformly in $U$ and~$V$.
\end{proof}
Lemma \ref{lem:est for Re f} yields
\begin{equation*}
	\CF_2(z_1,z_2) = \frac{2n^5e^{nf_*}}{\pi} \Bigg(\int\limits_{\stpointsnbh(t_*I, x_*)} \Vanddet^2(T^2) t_1t_2 e^{n\tilde{f}(T, U, V, v)} d\mu(U) d\mu(V) dT d\cConjScl{v}dv + O(e^{-C_1\log^2 n})\Bigg),
\end{equation*}
where by $\stpointsnbh(\hat T, \hat v)$ we denote a $\frac{\log n}{\sqrt{n}}$-neighborhood of the point $(\hat{T}, \hat{v})$, i.e.
\begin{equation}\label{stpoinnbh def}
\stpointsnbh(\hat T, \hat v) = \left\{(T, U, V, v) \in \idom \mid \normsized[\big]{T - \hat T} \le \frac{\log n}{\sqrt{n}}, \abs{v - \hat v} \le \frac{\log n}{\sqrt{n}}\right\},
\end{equation}
Changing the variables $T = t_*I + \tfrac{1}{\sqrt{n}}\tilde{T}$ and $v = x_* + \frac{1}{\sqrt{n}}\tilde{v}$, and expanding $\Vanddet^2(T^2) t_1t_2$ and the function~$f$ according to Lemma~\ref{lem:f(UT V^*) expansion}, we obtain
\begin{equation}\label{n-indep int}
\CF_2(z_1,z_2) = \frac{8t_*^4}{\pi}\mathsf{k}_n \int\limits_{\sqrt{n}\stpointsnbh(0)} \Vanddet^2(\tilde{T}) \exp\left\lbrace\qform(\tilde{T}, U, V, \tilde{v})\right\rbrace  d\mu(U) d\mu(V) d\tilde{T} d\cConjScl{\tilde{v}} d\tilde{v}(1 + o(1)),
\end{equation}
where
\begin{equation}\label{K_n def}
\mathsf{k}_n = n^2\exp\left\lbrace nf_{0*} + \sqrt{n}\frac{\abs{z_0}^2 + t_*^2}{h_*}\tr \left(\cConjScl{z}_0\cMatPos + z_0\cMatPos^*\right) \right\rbrace.
\end{equation}
Let us change the variables $V = WU$. Taking into account that the Haar measure is invariant w.r.t.\ shifts we get
\begin{equation*}
    \CF_2(z_1,z_2) = \frac{8t_*^4}{\pi}\mathsf{k}_n \int\limits_{\R^4} d\tilde{T}d\tilde{x}d\tilde{y} \int\limits_{U(2)} d\mu(U) \int\limits_{U(2)} d\mu(W) \Vanddet^2(\tilde{T}) \exp\left\{ \qform_1(\tilde{T}, U, W, \tilde{v}) \right\}(1 + o(1)),
\end{equation*}
where
\begin{align*}
    \qform_1(\tilde{T}, U, W, \tilde{v}) &= \frac{1}{2h_*}\tr \left[ -4(t_*^2 + bx_*)U\tilde{T}^2U^* - 4t_*U\tilde{T}U^*P - P^2 + 2(\abs{z_0}^2 + t_*^2)\cMatPos\cMatPos_W^* \right] \\
    &+ \frac{1}{2h_*}\left[( 2t_* \tr \tilde{T} + \tr P )^2 + 4bt_*\tilde{x} \tr \tilde T + 2bx_*(\tr \tilde T)^2 + 2b^2\abs{\tilde{v}}^2\right] - \abs{\tilde{v}}^2 \\
    &- \frac{1}{2h_*^2}\left[ 2t_*h_*\tr \tilde{T} + 2x_*h_*\tilde{x} + (\abs{z_0}^2 + t_*^2)\tr P \right]^2
\end{align*} 
with

\begin{align*}
    P = \cConjScl{z}_0\cMatPos + z_0\cMatPos_W^*.
\end{align*}
The next step is to change the variables $(\tilde{T}, U) \to H$ such that $H = U\tilde{T}U^*$. The Jacobian is $\frac{1}{(2\pi)}\Vanddet^{-2}(\tilde{T})$ (see e.g.\ \cite{Hu:63}). Thus
\begin{equation*}
	\CF_2(z_0+\tfrac{\zeta_1}{\sqrt n},z_0+\tfrac{\zeta_2}{\sqrt n}) = \frac{4t_*^4}{\pi^2}\mathsf{k}_n \int\limits_{\herm_2} dH \int\limits_{\R^2} d\tilde{x}d\tilde{y} \int\limits_{U(2)} d\mu(W) \exp\left\{ \qform_2(H, W, \tilde{v}) \right\}(1 + o(1)),
\end{equation*}
where $\herm_2$ is a space of hermitian $2 \times 2$ matrices and
\begin{align*}
	dH &= d(H)_{11}d(H)_{22} d\Re (H)_{12} d\Im (H)_{12},\\
    \qform_2(H, W, \tilde{v}) &= \frac{1}{2h_*} \left[ -\tr \left( 2\sqrt{t_*^2 + bx_*}H + \frac{t_*}{\sqrt{t_*^2 + bx_*}}P \right)^2  + 2(\abs{z_0}^2 + t_*^2)\tr \cMatPos\cMatPos_W^* \right] \\
    &+ \frac{1}{2h_*}\left[(2t_*\tr H + \tr P)^2 + 4bt_*\tilde{x} \tr H + 2bx_*(\tr H)^2 + 2b^2\abs{\tilde{v}}^2- \frac{bx_*}{t_*^2 + bx_*}\tr P^2\right]\\
    &- \frac{1}{2h_*^2}\left[ 2t_*h_*\tr H + 2x_*h_*\tilde{x} + (\abs{z_0}^2 + t_*^2)\tr P \right]^2- \abs{\tilde{v}}^2 .
\end{align*}
Shifting the variables $H \to H - \frac{t_*}{2(t_*^2 + bx_*)}P$ and moving integration back to the real axis, one has
\begin{equation*}
	\CF_2(z_0+\tfrac{\zeta_1}{\sqrt n},z_0+\tfrac{\zeta_2}{\sqrt n}) = \frac{4t_*^4}{\pi^2}\mathsf{k}_n \int\limits_{\herm_2} dH \int\limits_{\R^2} d\tilde{x}d\tilde{y} \int\limits_{U(2)} d\mu(W) \exp\left\{ \qform_3(H, W, \tilde{v}) \right\}(1 + o(1)),
\end{equation*}
where
\begin{align*}
    \qform_3(H, W, \tilde{v}) &= \frac{1}{2h_*}\tr \left[ - \frac{bx_*}{t_*^2 + bx_*}P^2 + 2(\abs{z_0}^2 + t_*^2)\cMatPos\cMatPos_W^* \right] \\
    &-\frac{4(t_*^2 + bx_*)}{h_*}\abs{(H)_{12}}^2 - \left(1 - \frac{b^2}{h_*}\right)\tilde{y}^2 - \frac{1}{2}\left\langle B\UcVec, \UcVec \right\rangle - \left\langle \QcVec, \UcVec \right\rangle \\
    &+ \frac{bx_*(h_*t_*^2 + 2(h_* - \abs{z_0}^4)bx_*)}{4h_*^2(t_*^2 + bx_*)^2}(\tr P)^2
\end{align*}
with
\begin{align*}
B &= 2\begin{pmatrix}
2t_*^2 + \frac{bx_*}{h_*} & 2t_*^2 - \frac{2t_*^2 + bx_*}{h_*} & t_*\left(2x_* - \frac{b}{h_*}\right) \\
2t_*^2 - \frac{2t_*^2 + bx_*}{h_*} & 2t_*^2 + \frac{bx_*}{h_*} & t_*\left(2x_* - \frac{b}{h_*}\right) \\
t_*\left(2x_* - \frac{b}{h_*}\right) & t_*\left(2x_* - \frac{b}{h_*}\right) & 1 + 2x_*^2 - \frac{b^2}{h_*}
\end{pmatrix}, \\ \UcVec &= \begin{pmatrix}
    (H)_{11} \\
    (H)_{22} \\
    \tilde{x}
\end{pmatrix}, \qquad \QcVec = \frac{b \tr P}{h_*(t_*^2 + bx_*)} \begin{pmatrix}
    x_*t_*(2\abs{z_0}^2 - 1) \\
    x_*t_*(2\abs{z_0}^2 - 1) \\
    t_*^2 + 2\abs{z_0}^2x_*^2
\end{pmatrix}.
\end{align*} 
The Gaussian integration over $H$ and $\tilde{v}$ implies
\begin{align}\label{last asympt}
&\CF_2(z_0+\tfrac{\zeta_1}{\sqrt n},z_0+\tfrac{\zeta_2}{\sqrt n}) = C(\zeta_1, \zeta_2) \int\limits_{U(2)} d\mu(W) \\ \notag
&\quad{}\times\exp\left\{ \frac{1}{h_*}\left[ -\frac{bx_*\abs{z_0}^2}{t_*^2 + bx_*} + (\abs{z_0}^2 + t_*^2) \right] \tr\cMatPos W^*\cMatPos^* W\right\} (1 + o(1)) \\ \notag
&=C(\zeta_1, \zeta_2) \int\limits_{U(2)} d\mu(W) \exp\left\{ \left[ 1 - \frac{b^2}{h_*} \right] \tr\cMatPos W^*\cMatPos^* W\right\} (1 + o(1)),
\end{align}
where 
\begin{align}\label{preint multiplier}
    C(\zeta_1, \zeta_2) &= \frac{\pi t_*^4 h_* \mathsf{k}_n }{t_*^2 + bx_*} \sqrt{\frac{8h_*}{(h_* - b^2)\det B}}
    \exp\left\lbrace - \frac{bx_*}{2h_*(t_*^2 + bx_*)}\tr (\cConjScl{z}_0^2\cMatPos^2 + z_0^2 (\cMatPos^*)^2) \right\rbrace\\ \notag
    &\quad{}\times \exp\left\lbrace\frac{1}{2}\left\langle B^{-1}\QcVec, \QcVec \right\rangle + \frac{bx_*(h_*t_*^2 + 2(h_* - \abs{z_0}^4)bx_*)}{4h_*^2(t_*^2 + bx_*)^2}(\tr P)^2\right\rbrace. 
   \end{align}
Straightforwardly substituting  the expressions \eqref{alp_expr}--\eqref{cub_eq}, \eqref{h_*} for $t_*^2$, $h_*$, and then for $\abs{z_0}^2$ into \eqref{preint multiplier} one can get
\begin{equation*}
    \exp\left\lbrace\frac{1}{2}\left\langle B^{-1}\QcVec, \QcVec \right\rangle + \frac{bx_*(h_*t_*^2 + 2(h_* - \abs{z_0}^4)bx_*)}{4h_*^2(t_*^2 + bx_*)^2}(\tr P)^2\right\rbrace 
    = \exp\left\lbrace \gamma \, (\tr P)^2/4 \right\rbrace.
\end{equation*}
where
\begin{equation}\label{gamma}
    \gamma= \frac{2b^2(1 - (1 - 4\alpha + 2\alpha^2)b^2)}{(1 - (1 - 4\alpha + 3\alpha^2)b^2)(1 + (2\alpha - 2\alpha^2)b^2)}
\end{equation}
and $\alpha$ is as in Lemma \ref{l:*}.

For computing the integral over the unitary group, we use the well-known Harish Chan\-dra/It\-syk\-son--Zuber formula 
\begin{prop}[see, e.g., \cite{Me:91}, A5]\label{pr:H-C/I--Z formula}
	Let $A$ and $B$ be normal $d \times d$ matrices with distinct eigenvalues $\{a_j\}_{j = 1}^d$ and $\{b_j\}_{j = 1}^d$ respectively. 
	Then
	\begin{equation*} 
		\int\limits_{U(d)} \exp\{t\tr AU^*BU\}d\mu(U) = \bigg(\prod\limits_{j = 1}^{d - 1} j!\bigg) \frac{\det\{\exp(ta_jb_k)\}_{j,k = 1}^d}{t^{(d^2 - d)/2}\Vanddet(A)\Vanddet(B)},
	\end{equation*}
	where $t$ is some constant, $\mu$ is a Haar measure, and $\Vanddet(A) = \prod\limits_{j > k}(a_j - a_k)$.
\end{prop}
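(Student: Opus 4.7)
The plan is to invoke the classical heat-equation argument. Since $A$ and $B$ are normal they are unitarily diagonalizable, so by the bi-invariance of the Haar measure the integral depends only on the eigenvalues; one may assume $A=\diag(a_j)$ and $B=\diag(b_j)$. Denote the left-hand side by $F(A,B)$ and view it as a function on Hermitian matrices in the variable $A$.

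The first step is to derive a PDE for $F$. For each fixed $U$, a direct computation gives $\Delta_A\, e^{t\tr(AU^*BU)} = t^2 \tr(B^2)\, e^{t\tr(AU^*BU)}$, where $\Delta_A$ is the Euclidean Laplacian on the space of Hermitian matrices (with the standard normalization). Integrating over $U(d)$ yields $\Delta_A F = t^2 \tr(B^2)\, F$. Since $F$ is $U(d)$-invariant, Weyl's integration formula reduces $\Delta_A$ on radial functions to its eigenvalue form: writing $F = \Phi/\Vanddet(A)$, the function $\Phi$ must satisfy the simpler equation $\sum_{j=1}^d \partial_{a_j}^2 \Phi = t^2 \tr(B^2)\,\Phi$ while being antisymmetric in the $a_j$.

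The second step is to verify that the ansatz $\Phi_0(A,B) = \det\bigl(e^{t a_j b_k}\bigr)_{j,k=1}^d$ solves the same PDE and shares the required antisymmetry. Differentiating the $j$-th row twice in $a_j$ brings out a factor $t^2 b_k^2$ from column $k$; summing over $j$ and using multilinearity of the determinant yields exactly $t^2 \sum_k b_k^2 \cdot \Phi_0 = t^2 \tr(B^2)\, \Phi_0$. The antisymmetry of $\Phi_0$ in the $a_j$ is manifest from the determinantal structure, so $\Phi_0/(\Vanddet(A)\Vanddet(B))$ is a symmetric, real-analytic solution of the same elliptic problem as $F$. By uniqueness in the appropriate symmetry class (a standard argument via expansion in Schur polynomials), $F = c_d\,\Phi_0/(\Vanddet(A)\Vanddet(B))$ for a constant $c_d$ depending only on $d$ and $t$.

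Finally I would pin down $c_d$ by matching asymptotics as $B\to 0$: the left-hand side tends to $1$, while the Vandermonde limit of $\det(e^{t a_j b_k})$ produces the factor $t^{(d^2-d)/2}\Vanddet(A)\Vanddet(B)/\prod_{j=1}^{d-1}j!$ in the numerator, forcing $c_d = \prod_{j=1}^{d-1} j!\,/\,t^{(d^2-d)/2}$ and giving the stated formula. The main subtlety is the rigorous radial reduction of the Laplacian together with the uniqueness statement; both are classical but nontrivial. A conceptually cleaner alternative is the Duistermaat--Heckman exact stationary-phase theorem applied to the adjoint orbit: the critical points of $U \mapsto \tr(AU^*BU)$ are indexed by permutations $\sigma \in S_d$ with critical value $\sum_j a_j b_{\sigma(j)}$, the Hessian at each produces $\prod_{i<j}(a_j - a_i)(b_{\sigma(j)} - b_{\sigma(i)})$, and the signed sum over $\sigma$ reassembles into the determinant, bypassing the PDE uniqueness issue entirely.
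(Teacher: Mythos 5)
The paper does not prove this proposition at all: it states the classical Harish--Chandra/Itzykson--Zuber formula and simply cites Mehta's book, Appendix A.5, where exactly the heat-equation argument you sketch (radial reduction of $\Delta_A$ on Hermitian matrices, the antisymmetric determinantal ansatz, normalization by a degenerate limit) is carried out. Your outline therefore reproduces the cited source rather than diverging from the paper, and it is correct in its main steps; the two spots you yourself flag -- the rigorous radial form of the Laplacian and the uniqueness of the antisymmetric solution -- are indeed where the real work lies, and your Duistermaat--Heckman alternative (exact stationary phase on the adjoint orbit, critical points indexed by $\sigma\in S_d$, Hessian factor $\prod_{i<j}(a_i-a_j)(b_{\sigma(i)}-b_{\sigma(j)})$ assembling into the determinant) is the standard way to sidestep that uniqueness issue. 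One small point: the paper states the result for normal, possibly non-Hermitian, $A,B$; the heat-equation argument lives on Hermitian matrices, so one should add a word about analytic continuation in the eigenvalues to cover the full statement.
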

An application of the formula to  \eqref{last asympt} gives
\begin{equation*} 
	\CF_2(z_0+\tfrac{\zeta_1}{\sqrt n},z_0+\tfrac{\zeta_2}{\sqrt n}) = C(\zeta_1, \zeta_2) \frac{\det \left\lbrace \exp\left[\left( 1 - \frac{b^2}{h_*} \right) \zeta_j\cConjScl{\zeta}_k\right]\right\rbrace_{j,k = 1}^2}{\left( 1 - \frac{b^2}{h_*} \right)\abs{\Vanddet(\cMatPos)}^2} (1 + o(1)),
\end{equation*}
where 
\begin{equation*}
    C(\zeta_1, \zeta_2) = \frac{\pi t_*^4 h_* \mathsf{k}_n}{t_*^2 + bx_*} \sqrt{\frac{8h_*}{(h_* - b^2)\det B}}
    \exp\left\lbrace - \frac{b^2}{2h_*^2}\tr (\cConjScl{z}_0^2\cMatPos^2 + z_0^2 (\cMatPos^*)^2) + \gamma \, (\tr P)^2/4 \right\rbrace
\end{equation*}
with $\gamma$ of (\ref{gamma}). Now using
\begin{align*}
&\tr P=z_0(\bar\zeta_1+\bar\zeta_2)+\bar z_0(\zeta_1+\zeta_2),\\
&(z_0(\bar\zeta_1+\bar\zeta_2)+\bar z_0(\zeta_1+\zeta_2))^2-2(\bar z_0\zeta_1+z_0\bar\zeta_1)^2-2(\bar z_0\zeta_2+z_0\bar\zeta_2)^2=-4(\Re(\bar z_0(\zeta_1-\zeta_2)))^2 ,
\end{align*}
and denoting $\beta=1 - \frac{b^2}{h_*}$, we get (\ref{*-behav}). The equation on $\beta$ follows from 
\[
\beta=1-\dfrac{b^2(1-\alpha)}{|z_0|^2}\Longrightarrow \alpha=-\dfrac{(1-\beta) |z_0|^2}{b^2}+1
\]
and (\ref{cub_eq}). Here we also used (\ref{h_*}).

Suppose now $(b, |z_0|^2)\in \Omega_2$ where the main contribution is given by $v$-saddle point. Writing
\[
v_0 = r_0e^{i\varphi},\quad r_0=\sqrt{1 - \abs{z_0}^4 / b^2}, \quad \tilde{v} = \tilde{r}e^{i\varphi},
\]
changing the variables $T = \tfrac{1}{\sqrt{n}}\tilde{T}$ and $v = (r_0 + \frac{1}{\sqrt{n}}\tilde{r})e^{i\varphi}$ and proceeding similarly to Lemmas \ref{lem:f(UT V^*) expansion}-\ref{lem:est for Re f}, we obtain
\begin{equation}\label{n-indep int II}
\CF_2(z_1,z_2) = \frac{2r_0\mathsf{k}_n^{II}}{\pi} \int\limits_{\sqrt{n}\stpointsnbh(0)} \Vanddet^2(\tilde{T}^2)\tilde{t}_1\tilde{t}_2 \exp\left\lbrace\qform^{II}(\tilde{T}, U, V, \tilde{v})\right\rbrace  d\mu(U) d\mu(V) d\tilde{T} d\tilde{r} d\varphi(1 + o(1)),
\end{equation}
where
\begin{align*}
    \qform^{II}(\tilde{T}, U, V, \tilde{r}e^{i\varphi}) &= \frac{1}{2b^2}\tr \left[ -2(br_0\cos\varphi + b^2 - \abs{z_0}^2)\tilde{T}^2 - \cConjScl{z}_0^2\cMatPos^2 - z_0^2(\cMatPos^*)^2 \right] \\
        &+ \frac{1}{2b^2}\left[(\tr P_1)^2 + 2br_0\cos\varphi(\tr \tilde T)^2 \right] - \frac{1}{2b^4}\left[ 2b^2r_0\tilde{r} + \abs{z_0}^2\tr P_1 \right]^2,
    \end{align*}
\begin{equation}\label{K_n def II}
\mathsf{k}_n^{II} = n^{1/2}\exp\left\lbrace n\left(-1 + \frac{\abs{z_0}^4}{b^2} + \log b^2\right) + \sqrt{n}\frac{\abs{z_0}^2}{b^2}\tr \left(\cConjScl{z}_0\cMatPos + z_0\cMatPos^*\right) \right\rbrace.
\end{equation}
Taking the integral w.r.t.\ 
$\tilde{r}$ one gets
\begin{equation*}
\begin{split}
	\CF_2 &= C^{II}(\zeta_1, \zeta_2)\int\limits_{0}^{+\infty}d\tilde t_1 \int\limits_{0}^{+\infty}d\tilde t_2\int\limits_{0}^{2\pi}d\varphi\,\Vanddet^2(\tilde{T}^2)\tilde{t}_1\tilde{t}_2(1 + o(1)) \\
    &\quad{}\times\exp\left\lbrace \frac{1}{b^2}\left[ -(br_0\cos\varphi + b^2 - \abs{z_0}^2)\tr \tilde{T}^2 + br_0\cos\varphi(\tr \tilde T)^2 \right] \right\rbrace,
\end{split}
\end{equation*}
where
\begin{equation*}
    C^{II}(\zeta_1, \zeta_2) = \mathsf{k}_n^{II}\sqrt{\frac{2}{\pi}}\exp\left\lbrace \frac{1}{2b^2}\left[(\tr \cConjScl{z}_0\cMatPos + \tr z_0\cMatPos^*)^2 - \cConjScl{z}_0^2\tr\cMatPos^2 - z_0^2\tr(\cMatPos^*)^2\right]  \right\rbrace.
\end{equation*}
Since $1/2b^2=p/4$ according to (\ref{fin_p}), this implies (\ref{v-behav}).

Similarly, if $(b,|z_0|^2)\in \Omega_3$ where zero saddle point is dominant, 
changing the variables $T = \tfrac{1}{\sqrt{n}}\tilde{T}$ and $v = \tfrac{\tilde v}{\sqrt{n}}$, one obtains
\begin{equation*}
\CF_2(z_1,z_2) = \frac{2\mathsf{k}_n^{III}}{\pi} \int\limits_{\sqrt{n}\stpointsnbh(0)} \Vanddet^2(\tilde{T}^2)\tilde{t}_1\tilde{t}_2 \exp\left\lbrace\qform^{III}(\tilde{T}, U, V, \tilde{v})\right\rbrace  d\mu(U) d\mu(V) d\tilde{T} d\overline{\tilde v} d\tilde v(1 + o(1)),
\end{equation*}
where
\begin{equation*}
        \qform^{III}(\tilde{T}, U, V, \tilde{v}) = \frac{1}{2\abs{z_0}^4}\tr \left[ -2(\abs{z_0}^4 - \abs{z_0}^2)\tilde{T}^2 - \cConjScl{z}_0^2\cMatPos^2 - z_0^2(\cMatPos^*)^2 \right] 
        + \frac{b^2}{\abs{z_0}^4} \abs{\tilde{v}}^2 - \abs{\tilde{v}}^2,
    \end{equation*}
    and
\begin{equation}\label{K_n def III}
\mathsf{k}_n^{III} = \exp\left\lbrace n\log \abs{z_0}^4 + \frac{\sqrt{n}}{\abs{z_0}^2}\tr \left(\cConjScl{z}_0\cMatPos + z_0\cMatPos^*\right) \right\rbrace.
\end{equation}
Taking integration with respect to $\tilde t_1,\tilde t_2>0$ and and with respect to $\tilde v$ we get
\begin{equation*}
	\CF_2(z_0+\tfrac{\zeta_1}{\sqrt n},z_0+\tfrac{\zeta_2}{\sqrt n})=  \frac{\mathsf{k}_n^{III}(1 + o(1))}{\left(1 - \frac{1}{\abs{z_0}^2}\right)^4\left(1 - \frac{b^2}{\abs{z_0}^4}\right)}\exp\left\lbrace -\frac{1}{2\abs{z_0}^4}\left[ \cConjScl{z}_0^2\tr\cMatPos^2 + z_0^2\tr(\cMatPos^*)^2\right]  
 \right\rbrace,
\end{equation*}
which implies (\ref{0-behav}).

Theorem 2 follows from the consideration above for $b=0$ (see (\ref{inf_p})) with minor modifications. 


\bibliography{SpGinDetCorr}

\end{document}